\newcommand{\digest}{\ensuremath{\sigma}}
\newcommand{\ID}{\ensuremath{\mathrm{ID}}}
\newcommand{\W}{\ensuremath{\mathcal{W}}}
\newcommand{\iv}{\ensuremath{\mathit{iv}}}
\newcommand{\iu}{\ensuremath{\mathit{iu}}}
\newcommand{\kv}{\ensuremath{\mathit{kv}}}
\newcommand{\ku}{\ensuremath{\mathit{ku}}}
\newcommand{\MMO}{\ensuremath{\mathsf{MMO}}}
\newcommand{\MMI}{\ensuremath{\mathsf{MMI}}}
\newcommand{\LNCS}{\emph{Lecture Notes in Computer Science}}
\newtheorem{theorem}{Theorem}[section]
\newtheorem{lemma}[theorem]{Lemma}
\newtheorem{corollary}[theorem]{Corollary}
\theoremstyle{definition}
\newtheorem{definition}{Definition}[section]
\newtheorem{example}{Example}[section]
\newtheorem{remark}{Remark}[section]
\begin{document}

\title{Optimal constructions for ID-based one-way-function key predistribution schemes
realizing specified communication graphs}
\author{Maura~B.~Paterson\\
Department of Economics,
Mathematics and Statistics\\ Birkbeck, University of London,
Malet Street, London WC1E 7HX, UK
\and
Douglas~R.~Stinson\thanks{D.~Stinson's research is supported by NSERC discovery grant 203114-11}
\\David R.\ Cheriton School of Computer Science\\ University of Waterloo,
Waterloo, Ontario, N2L 3G1, Canada
}
\date{\today}

\maketitle

\begin{abstract} We study a method for key predistribution in a network of $n$ users where pairwise keys are computed by hashing users' IDs along with secret information that has been (pre)distributed to the network users by a trusted entity. A  communication graph $G$ can be specified to indicate which pairs of users should be able to compute keys. We determine necessary and sufficient conditions for schemes of this type to be secure. We also consider the problem of minimizing the storage requirements of such a scheme; we are interested in the total storage as well as the maximum storage required by any user. Minimizing the total storage is NP-hard, whereas minimizing the maximum storage required by a user can be computed in polynomial time.
\end{abstract}

\section{Introduction}

Suppose we have a network of $n$ users and we want every pair of users to have
a secure {\it pairwise key}. When keying information is distributed ``ahead of time'' by a trusted
authority, we have a {\it key predistribution scheme}, or {\it KPS}. 
The {\it trivial} KPS assigns $n-1$ distinct pairwise keys to each user, so we say that
each user has {\it storage} equal to $n-1$. The {\it total storage} in the trivial scheme is $n(n-1)$,
since each of the $n$ users has storage equal to $n-1$. Assuming the pairwise keys are
chosen independently and uniformly at random from a specified keyspace, the trivial scheme is secure against
(maximum size) coalitions of size $n-2$, since each pairwise key is known only to two participants and cannot be computed even if all the remaining $n-2$ participants collude.

An interesting way to reduce the storage requirement, as compared to the trivial scheme, is to use
a {\it Blom Scheme} \cite{Bl83}. The Blom Scheme incorporates a {\it security parameter} denoted by $k$;
pairwise keys are unconditionally secure against coalitions of up to $k$  users.
Each user's storage requirement in the Blom Scheme is $k+1$ and this storage requirement is shown to 
be optimal by Blundo {\it et al.}\ \cite{BDHKVY}.

Another approach, due to Lee and Stinson \cite{LS05}, is called 
an {\it ID-based one-way-function key predistribution scheme}, also known as {\it IOS}.
Here, pairwise keys are computed by hashing public information along with secret information.
This technique can be used to construct a secure scheme in which every pair
of $n$ users has a pairwise key. Here we obtain security against maximum size coalitions,
but the security depends on the hash function used to construct the keys.
We will assume that the hash function can be modelled as a 
random function, i.e, the security analysis will be done in the standard {\it random oracle model} \cite{BR93}. 
The storage requirement of this scheme is reduced by almost half
as compared to the trivial scheme. In \cite{CAG13}, a very similar scheme was described that has
the same storage requirement as IOS. 

It is also possible to consider a more general setting where we specify a 
{\it communication graph} $G$ consisting of all the pairs
of users who we want to be able to share a secret key. A trivial KPS for a given communication graph $G$
would assign $\mathsf{deg}(u)$ keys to vertex $u$, for every vertex $u$ in $G$, where 
$\mathsf{deg}(u)$ denotes the degree of vertex $u$.
In the case of unconditionally secure schemes secure against coalitions of size $k$,
Blundo {\it et al.}\ \cite{BDHKVY} gave a construction where a vertex $u$ has storage requirement
$\min \{ k+1, \mathsf{deg}(u)\}$, for every vertex $u$.  Lee and Stinson's construction \cite{LS05}
applies to regular graphs of even degree $d$; the storage requirement for each vertex is
$1 + d/2$.

\subsection{Our contributions}

In Section \ref{graphs.sec}, we introduce some graph-theoretic terminology and basic results that are
required later in the paper.
In Section \ref{model.sec}, we define a general model for IOS for a specified communication graph $G$. 
In Section \ref{secure.sec}, 
we determine necessary and sufficient
conditions for an IOS to be secure. Our characterization of secure IOS involves certain {\it key graphs}.
We then use our characterization of secure IOS to investigate the storage requirements of the
these types of schemes. We show in Section \ref{star.sec} 
that it is sufficient to restrict our attention to schemes obtained
by decomposing the edges of the communication graph into ``stars''. In Section \ref{totalstorage.sec}, we investigate how to minimize the total storage in the scheme. In general, this turns out to be equivalent to determining
the size of a maximum independent set of vertices in $G$, which is an NP-hard problem. For complete
graphs, however, it is easy to give an exact answer.
In Sections \ref{maxstorage.sec} and \ref{MMO.sec}, we turn to the problem of minimizing the maximum storage required 
by any vertex of $G$. We give a complete solution for regular graphs. For arbitrary graphs, we can show that 
the optimal maximum storage can be determined in polynomial time by exploiting a connection
with the {\it  minimum maximum indegree} problem. 
In Section \ref{discuss.sec}, we discuss our results in comparison 
to the recent paper by Choi, Acharya and Gouda \cite{CAG13}.

\subsection{Graph-theoretic terminology}
\label{graphs.sec}

\begin{definition}
A {\em graph} is a pair $G = (V,E)$ where $V$ is a finite set  of {\em vertices}
and $E$ is
a set of {\em edges}, where every edge is a set of two vertices. 
We say that the edge $e = \{u,v\}$ {\em joins} the vertices $u$ and $v$, and
$u$ and $v$ are {\em incident with} $e$. We may also write an edge $\{u,v\}$
as $uv$. 
If $E$ is instead a multiset, then we say
that $G = (V,E)$ is a {\em multigraph}.
\end{definition}

\begin{definition}For a vertex $u$ in a graph $G$,
The {\em degree} of  $u \in V$, denoted $\mathsf{deg}(u)$, 
is the number of edges that are incident with $v$.
A graph is {\em $m$-regular} if every vertex has degree equal to $m$.
\end{definition}

For a graph $G = (V,E)$, we will denote the number of vertices (i.e., $|V|$) by $n$ and the number of 
edges (i.e., $|E|$) by $\epsilon$.

\begin{definition} Suppose $G = (V,E)$ is a graph and $u,v \in V$, $u \neq v$.
A {\em $(u,v)$-path} is a sequence of the form $v_1,e_1,v_2,e_2, \dots, e_{\ell},v_{\ell+1}$,
where $v_1, \dots , v_{\ell+1} \in V$ are distinct vertices, $e_1, \dots , e_{\ell} \in E$, $v_1 = u$,
$v_{\ell+1} = v$, and $e_i$ is incident with $v_i$ and $v_{i+1}$ for $1 \leq i \leq \ell$.
\end{definition}

\begin{definition} A graph is {\em connected} if, for
all distinct vertices $u$ and $v$, there exists a $(u,v)$-path.
\end{definition}


\begin{definition}
A {\em directed graph} is a pair $G = (V,E)$ where $V$ is a finite set  of {\em vertices}
and $E$ is
a set of {\em directed edges}, where every edge is an ordered pair of two distinct vertices.
The {\em indegree} of a vertex $v \in V$ is the number of directed edges $(u,v) \in E$.
The {\em outdegree} of a vertex $v \in V$ is the number of directed edges $(v,u) \in E$.
\end{definition}

\begin{definition}
A connected graph  (or multigraph) $G = (V,E)$ has an {\em eulerian circuit} if there exists a circuit
in which every edge is used exactly once. It is well-known that a connected (multi)graph has 
an eulerian circuit if and only if every vertex has even degree.
\end{definition}

\begin{definition}
A {\em complete graph} $K_n$ is a graph on $n$ vertices where every pair of vertices
are joined by an edge.
\end{definition}

\begin{definition}
A {\em complete bipartite graph} $K_{n_1,n_2}$ is a graph on $n_1+n_2$ vertices $V = V_1 \cup V_2$,
where $|V_1| = n_1$, $|V_2| = n_2$, and $E = \{ \{v_1,v_2\} : v_1 \in V_1, v_2 \in V_2 \}$.
\end{definition}

\begin{definition}
\label{star.def}
A {\em star} is a complete bipartite graph
$K_{1,m}$ for some $m \geq 1$. If $m > 1$, then the {\em centre} of the 
graph is the (unique) vertex of degree $m$. If $m=1$, then the graph is a single edge
and we can specify either endpoint to be the centre of the star.
Vertices in a star that are not the centre will be termed
{\em leaves}.
A {\em directed star} is obtained from a star by directing every edge from the
leaf to the centre.
\end{definition}

\begin{definition}
Given a graph $G = (V,E)$, an {\em independent set} is a subset of vertices
$V_0 \subseteq V$ such that $\{u,v\} \not\in E$ for every $u,v \in V_0$.
The size of a maximum independent set of vertices in a graph $G$ is denoted by $\alpha(G)$.
\end{definition}

\begin{theorem}[Rosenfeld \cite{Ro64}]
\label{rosenfeld.thm}
In any $m$-regular graph $G = (V,E)$, we have  \[ \alpha(G) \leq \min \{\lfloor n/2 \rfloor, n-m \},\]
where $n = |V|$.
\end{theorem}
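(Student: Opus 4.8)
The plan is to establish the two upper bounds separately, each by a short counting argument applied to a fixed maximum independent set $V_0 \subseteq V$ with $|V_0| = \alpha(G)$. We may assume $V_0 \neq \emptyset$ and $m \geq 1$; otherwise both inequalities are immediate. (Note that in an $m$-regular graph on $n$ vertices with $m \geq 1$ we have $n \geq m+1$, since any vertex has $m$ distinct neighbours all different from itself, so $n - m \geq 1 > 0$ and the empty-set case is covered too.)

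For the bound $\alpha(G) \leq n - m$, I would simply fix any vertex $v \in V_0$. Since $V_0$ is independent, all $m$ neighbours of $v$ lie in $V \setminus V_0$, hence $|V \setminus V_0| \geq m$. Rearranging gives $\alpha(G) = |V_0| \leq n - m$.

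For the bound $\alpha(G) \leq \lfloor n/2 \rfloor$, I would count, in two ways, the edges of $G$ having exactly one endpoint in $V_0$. On one hand, by independence every edge incident with a vertex of $V_0$ is such an edge, and each vertex of $V_0$ is incident with exactly $m$ of them, so their number equals $m\,|V_0|$. On the other hand, each vertex of $V \setminus V_0$ is incident with at most $m$ such edges (its total degree being $m$), so their number is at most $m\,|V \setminus V_0| = m\,(n - |V_0|)$. Comparing the two expressions yields $m\,|V_0| \leq m\,(n - |V_0|)$; dividing through by $m$ gives $|V_0| \leq n/2$, and since $|V_0|$ is an integer, $\alpha(G) \leq \lfloor n/2 \rfloor$. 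Taking the minimum of the two bounds gives the theorem.

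There is no substantial obstacle here — the whole argument is a routine double count. The only points needing a little care are the degenerate cases (an empty independent set, or $m = 0$, for which the $\lfloor n/2 \rfloor$ bound is not meaningful) and the observation that cancelling the factor $m$ in the second bound requires $m \geq 1$; both are disposed of in the opening normalisation.
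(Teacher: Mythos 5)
The paper does not prove this statement at all --- it is quoted from Rosenfeld's 1964 paper with only a citation --- so there is no internal proof to compare against; your argument stands or falls on its own. It is correct and is the standard one: counting edges between $V_0$ and $V\setminus V_0$ gives $m|V_0|\le m(n-|V_0|)$ and hence $\alpha\le\lfloor n/2\rfloor$ once $m\ge 1$, while the $m$ neighbours of any single vertex of $V_0$ give $\alpha\le n-m$. One point in your normalisation is wrong as written, though: you assert that for $m=0$ ``both inequalities are immediate,'' but the bound $\alpha(G)\le\lfloor n/2\rfloor$ is in fact \emph{false} for $m=0$, since an edgeless graph has $\alpha=n>\lfloor n/2\rfloor$. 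The theorem as stated implicitly requires $m\ge 1$; you clearly sense this later when you call the $\lfloor n/2\rfloor$ bound ``not meaningful'' for $m=0$, but that remark contradicts the earlier claim that the case is immediate. The honest fix is to state the hypothesis $m\ge 1$ explicitly rather than to absorb $m=0$ into a degenerate case. None of this affects the paper, which invokes the theorem only for regular graphs of odd degree $d\ge 3$.
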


\begin{definition}
Given a graph $G = (V,E)$, the {\em complement} of $G$ is the graph $\overline{G} = (V, \overline{E})$, 
where $\{u,v\} \in \overline{E}$ if and only if $\{u,v\} \not\in E$, for every $u,v \in V$, $u \neq v$.
\end{definition}

\begin{definition}
\label{join.def} Let $G = (V,E)$ and $H = (W,F)$ be graphs, where $V \cap W = \emptyset$.
The {\em join} of $G$ and $H$, is the graph denoted as $G \vee H$. The vertex set of
$G \vee H$ is $V \cup W$. The edges of $G \vee H$ consist of all edges in $E$ or $F$, 
and all $\{v,w\}$ with $v \in V$ and $w \in W$.
\end{definition}

\begin{definition}
\label{induced.def} Let $G = (V,E)$ be a graph and let $W \subseteq V$.
The {\em subgraph  of $G$ induced by $W$} is the graph $H = (W,F)$, 
where $F = \{ \{u,v\} \in E : u,v \in W\}$.
\end{definition}

\section{A general model for ID-based one-way-function KPS}
\label{model.sec}

In an ID-based one-way-function key predistribution scheme (or IOS), 
pairwise keys are constructed by hashing public information along with secret information.
We  describe a general model for IOS for
a given communication graph $G = (V,E)$ on $n$ vertices. We identify the $n$ vertices
in $V$ with a set of $n$ users, say $\mathcal{U}$.
Our first goal is to obtain a KPS in which two users
$u,v \in \mathcal{U}$ have a pairwise key $L_{u,v}$ whenever$\{u,v\} \in E$.
Security of the scheme will be addressed in Section \ref{secure.sec}.

Let $f \colon E(G) \rightarrow \{1, \dots , t\}$ be a publicly known surjective function and let
$R_1, \dots , R_t \in \{0,1\}^{\digest }$ be secret values chosen uniformly at random from $\{0,1\}^{\digest }$.
Let $h \colon \{0,1\}^* \rightarrow \{0,1\}^{\digest }$ be a hash function (which we model as a random oracle). 
The length of the $R_i$'s, namely $\digest$, 
is the same as the length of the output of $h$.
A suitable value for $\digest$ is $\digest = 128$.
The function $h$ will be used as a {\it key derivation function}.

Every  pairwise key $L_{u,v}$ will be  computed as 
\begin{equation}
\label{key.eq}
L_{u,v} = L_{v,u} = h(R_{f(\{u,v\})} \parallel \ID(u) \parallel \ID(v)),
\end{equation}
where 
\begin{description}
\item[(1)] for every $u \in \mathcal{U}$, $\ID(u)$ denotes public identifying information for user $u$, and
\item[(2)] $u < v$ (this requirement ensures
that $L_{u,v} = L_{v,u}$ in (\ref{key.eq})). 
\end{description} 

Suppose a value $R_{f(\{u,v\})}$ is not known to a coalition of users. Recall that $R_{f(\{u,v\})}$
was chosen uniformly at random from a set of size $2^{\digest}$. Also, the key $L_{u,v}$ is the output of
a random function $h$ that takes on values from a set of size $2^{\digest}$. Therefore this key
can be regarded as a secure $\digest$-bit key.

Eventually, we will investigate whether secret values may be ``repeated'', i.e., used for the
computation of more than one key, without compromising security. The motivation is
that this might enable the storage of the scheme to be reduced. This question will 
be addressed in Section \ref{secure.sec}.

\begin{remark}
As described above, the public inputs to the
hash function are taken to be (public) IDs. Actually, it doesn't really matter what the
public inputs to $h$ are, as long as there do not exist two keys that have the same
public and secret inputs and we ensure that $L_{u,v} = L_{v,u}$.
\end{remark}

It is obvious that if $\{u,v\} \in E$ and $f(\{u,v\}) = j$, then 
users $u$ and $v$ each must store 
one of $R_{j}$ or $L_{u,v}$. 
We will assume that neither $u$ nor $v$ stores both
$R_{j}$ and $L_{u,v}$, since $L_{u,v}$ can be computed if $R_j$ is known.

\begin{definition}
For $1 \leq j \leq t$, define 
\[E_j = f^{-1}(j) = \{ \{u,v\} \in E: f(\{u,v\}) = j\}\] and let $V_j$ be the vertices
spanned by $E_j$.
$(V_j,E_j)$ is a graph that we term the {\em $j$th key graph}.

Let \[W_j = \{ v \in V_j : v \mbox{ stores } R_j\}\]
and define  $\W = (W_1, \dots , W_t)$. 
Since we are assuming that $f$ is surjective, it follows that
$|E_j| \geq 1$ and $|V_j| \geq 2$ for all $j \in \{1, \dots , t\}$.

An IOS is fully specified by 
$f$ and $\W$, so we will refer to  $(f,\W)$ as a \emph{$G$-IOS}.
\end{definition}

To summarize, here is all the keying information that is stored in a $G$-IOS, $(f,\W)$.
Suppose $u \in V$ is any vertex and  $\{u,v\} \in E$ is any edge that is incident with $u$.
If $\{u,v\} \in E_j$, 
then the vertex $u$ stores
$R_j$ when $u \in W_j$, and it stores $L_{u,v}$, otherwise.

\begin{example} 
\label{exam1}
Consider the graph $G = (V,E)$, where
\begin{center}$V = \{1,2,3,4,5\}$ and $E = \{12,13,14,15,23,25,34,45\}$.\end{center}
Suppose we define 
\begin{align*}
&f(12) = f(13) = f(14) = 1 &&W_1 = \{1\} \\
&f(15) = f(25) = f(45) = 2 &&W_2 = \{5\}\\
&f(23) = f(34) = 3 &&W_3 = \{3\}.
\end{align*}
Then it is easy to see that
\begin{align*}
&E_1 = \{12,13,14\} &&V_1 = \{1,2,3,4\} \\
&E_2 = \{15,25,45\} &&V_2 = \{1,2,4,5\}\\
&E_3 = \{23,34\} &&V_3 = \{2,3,4\}.
\end{align*}
The keying material held by each user is as follows:
\begin{center}
\begin{tabular}{l}
user 1 stores $R_1$, $L_{1,5}$\\
user 2 stores $L_{1,2}$, $L_{2,3}$, $L_{2,5}$\\
user 3 stores $R_3$, $L_{1,3}$\\
user 4 stores $L_{1,4}$, $L_{3,4}$, $L_{4,5}$\\
user 5 stores $R_2$.
\end{tabular}
\end{center}
$\blacksquare$
\end{example}

\subsection{Secure IOS}
\label{secure.sec}

Suppose that $(f,\W)$ is a $G$-IOS. We say that $(f,\W)$ is \emph{secure}
if there does not exist a user $w$ who can compute a key $L_{u,v}$ where $\{u,v\} \in E$ and
$w \neq u,v$. This security condition will be satisfied
provided that $w$ does not know either of the values $R_{j}$ or $L_{u,v}$
(where $j = f(\{u,v\})$). Note that a secure IOS
is automatically secure against maximum size coalitions.

We now consider when a $G$-IOS, as defined above, will be secure.
We prove a sequence of simple lemmas that culminate in a characterization of
secure $G$-IOS. The first lemma is obvious.

\begin{lemma}
\label{L0}
Suppose $(f,\W)$ is a $G$-IOS, $\{u,v\} \in E$ and $w$ stores $L_{u,v}$, where $w \neq u,v$. Then $(f,\W)$ 
is not secure.
\end{lemma}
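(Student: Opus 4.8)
The plan is to simply unwind the definition of security that was given just before the lemma. By definition, the $G$-IOS $(f,\W)$ is \emph{secure} precisely when there is no user $w$ together with an edge $\{u,v\} \in E$ satisfying $w \neq u,v$ such that $w$ can compute the key $L_{u,v}$. Consequently, to show that $(f,\W)$ is \emph{not} secure it is enough to exhibit a single such witnessing triple $(w,u,v)$.

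First I would take the edge $\{u,v\} \in E$ and the user $w \neq u,v$ supplied by the hypothesis, and recall that, by assumption, $w$'s stored keying material contains the value $L_{u,v}$ itself. Next I would observe the trivial implication ``stores'' $\Rightarrow$ ``can compute'': to produce $L_{u,v}$, the user $w$ need only read it off from its own storage, with no hash evaluations whatsoever. Hence the triple $(w,u,v)$ meets every requirement in the definition of an attack ($\{u,v\}\in E$, $w\neq u,v$, and $w$ can compute $L_{u,v}$), which directly contradicts the definition of a secure scheme. Therefore $(f,\W)$ is not secure.

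I expect no genuine obstacle here — this is the ``obvious'' base case of the sequence of lemmas building toward the full characterization. The only point worth stating carefully is that the security condition is a universally quantified statement over all edges of $G$ and all third parties, so a single witnessing triple (exactly what the hypothesis provides) suffices to negate it. In particular, nothing about the random oracle model, the uniform choice of the $R_i$'s, or the key graphs $(V_j,E_j)$ is needed for this direction; those ingredients will only enter when we later argue that the \emph{absence} of such stored keys (and of shared secret values) is \emph{sufficient} for security.
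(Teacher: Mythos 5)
Your proof is correct and is exactly the intended argument: the paper gives no explicit proof of this lemma, simply declaring it obvious, and your unwinding of the security definition (a user who stores $L_{u,v}$ can trivially compute it, providing the witnessing triple that negates security) is precisely what is meant.
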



\begin{lemma}
\label{L-1}
Suppose $(f,\W)$ is a secure $G$-IOS and $w$ stores $R_j$. Then $w \in V_j$.
\end{lemma}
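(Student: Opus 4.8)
The plan is to prove Lemma~\ref{L-1} by contradiction, exploiting the definition of what it means for a user to "store $R_j$" together with the very definition of the key graph. Suppose $(f,\W)$ is a secure $G$-IOS, that $w$ stores $R_j$, but $w \notin V_j$. By definition, $V_j$ is the set of vertices spanned by the edge set $E_j = f^{-1}(j)$, so $w \notin V_j$ means that no edge of $E_j$ is incident with $w$.

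Next I would use the fact that $f$ is surjective, which is built into the model: since $j \in \{1,\dots,t\}$ lies in the range of $f$, the set $E_j$ is nonempty, so there is at least one edge $\{u,v\} \in E_j$. For this edge we have $f(\{u,v\}) = j$, and since no edge of $E_j$ touches $w$, we know $w \neq u$ and $w \neq v$. Now by the key-derivation formula~(\ref{key.eq}), the key $L_{u,v}$ is computed as $h(R_{f(\{u,v\})} \parallel \ID(u) \parallel \ID(v)) = h(R_j \parallel \ID(u) \parallel \ID(v))$, and all of $R_j$, $\ID(u)$, $\ID(v)$ are available to $w$: the IDs are public, and $R_j$ is stored by $w$ by hypothesis. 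Hence $w$ can compute $L_{u,v}$ for an edge $\{u,v\} \in E$ with $w \neq u,v$, contradicting the security of $(f,\W)$. Therefore $w \in V_j$, as claimed.

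The argument is almost entirely a matter of unwinding definitions, so there is no real obstacle; the only point requiring a moment's care is invoking surjectivity of $f$ to guarantee $E_j \neq \emptyset$ (so that there is actually an edge whose key $w$ could compute). This is precisely the place where the standing assumption "$|E_j| \ge 1$ and $|V_j| \ge 2$ for all $j$" from the definition of the key graph gets used. Everything else follows immediately from~(\ref{key.eq}) and the definition of security.
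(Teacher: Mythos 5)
Your proof is correct and follows essentially the same argument as the paper: pick an edge $\{u,v\} \in E_j$ (nonempty by surjectivity of $f$), note $w \neq u,v$ since $w \notin V_j$, and observe that $w$ can then compute $L_{u,v}$, contradicting security. The only difference is that you make explicit the role of surjectivity in guaranteeing $E_j \neq \emptyset$, which the paper leaves implicit.
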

\begin{proof}
Suppose $w$ stores $R_j$ but $w \not\in V_j$. Let $\{u,v\} \in E_j$; then
$w \neq u,v$ because $w \not\in V_j$. But then $w$ can compute $L_{u,v}$, so 
$(f,\W)$ is not secure.
\end{proof}

\begin{lemma}
\label{L1}
Suppose $(f,\W)$ is a $G$-IOS and $\{u,v\}, \{u,v'\} \in E_j$, where $v \neq v'$. 
If $v \in W_j$  or $v' \in W_j$, then $(f,\W)$ 
is not secure.
\end{lemma}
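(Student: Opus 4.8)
The plan is to show directly that the party in $W_j$ can compute a key it has no business knowing. Without loss of generality, suppose $v \in W_j$, so that by definition $v$ stores $R_j$. Now consider the edge $\{u,v'\}$: since $\{u,v'\} \in E_j = f^{-1}(j)$, we have $f(\{u,v'\}) = j$, and hence by~(\ref{key.eq}) the key $L_{u,v'}$ is obtained by applying $h$ to $R_j$ concatenated with the public identifiers of $u$ and $v'$. All of these inputs are available to $v$: the identifiers are public, and $v$ holds $R_j$. Therefore $v$ can compute $L_{u,v'}$.

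The second step is to check that $v$ is genuinely a ``third party'' for this key, i.e.\ $v \notin \{u, v'\}$. Indeed $v \neq v'$ by hypothesis, and $v \neq u$ because $\{u,v\} \in E$ is an edge (so its two endpoints are distinct). Hence $v$ is a user with $v \neq u, v'$ who can compute $L_{u,v'}$ for the edge $\{u,v'\} \in E$, which contradicts the definition of a secure $G$-IOS. Finally, the case $v' \in W_j$ follows by interchanging the roles of $v$ and $v'$ in the argument above.

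I do not expect any real obstacle here: the statement is essentially a direct consequence of the definitions, and the only point requiring a moment's care is verifying $v \neq u$ (resp.\ $v' \neq u$), which is immediate from the fact that the endpoints of an edge are distinct. One could alternatively phrase the conclusion via Lemma~\ref{L-1} or Lemma~\ref{L0}, but the self-contained argument above is shortest.
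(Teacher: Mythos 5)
Your proposal is correct and follows exactly the same route as the paper's proof: the party in $W_j$ holds $R_j$ and can therefore evaluate the hash in~(\ref{key.eq}) on the other edge of $E_j$, yielding a key for an edge it is not incident with. The paper's version is just a terser statement of the same argument; your added check that $v \notin \{u,v'\}$ is a reasonable bit of explicitness but not a new idea.
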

\begin{proof}
If $v \in W_j$, then $v$ can compute $L_{u,v'}$.
Similarly, if $v' \in W_j$, then $v'$ can compute $L_{u,v}$.
\end{proof}

\begin{lemma}
\label{L2}
Suppose $(f,\W)$ is a $G$-IOS and $\{u,v\}, \{u',v'\} \in E_j$, 
where $\{u,v\} \cap \{u',v'\} = \emptyset$. 
If $\{u,u',v,v'\} \cap W_j \neq \emptyset$, then $(f,\W)$ 
is not secure.
\end{lemma}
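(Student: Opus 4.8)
The plan is to reduce to a single case using the evident symmetry of the hypothesis. The statement is invariant under swapping the roles of the edges $\{u,v\}$ and $\{u',v'\}$, and also under swapping the two endpoints within either edge; hence without loss of generality I may assume that it is $u$ that lies in $W_j$, i.e., that $u$ stores $R_j$.

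Next I would exploit the disjointness hypothesis $\{u,v\} \cap \{u',v'\} = \emptyset$, which in particular gives $u \neq u'$ and $u \neq v'$. Since $\{u',v'\} \in E_j$, we have $f(\{u',v'\}) = j$, so by (\ref{key.eq}) the key $L_{u',v'}$ equals $h(R_j \parallel \ID(u') \parallel \ID(v'))$ (up to the ordering convention on the IDs, which does not affect the argument). Because $u$ stores $R_j$, user $u$ can evaluate this expression and thereby compute $L_{u',v'}$.

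Finally, since $u \notin \{u',v'\}$, this is exactly the situation forbidden by the security definition: a user $w = u$ distinct from both endpoints of the edge $\{u',v'\}$ can compute $L_{u',v'}$. Hence $(f,\W)$ is not secure. (Equivalently, one could phrase the last step as an appeal to Lemma~\ref{L0} after observing that, in any run, $u$ is now able to ``store'' $L_{u',v'}$ by recomputing it.)

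I do not anticipate any real obstacle here; the only point requiring a moment's care is spelling out that the four apparently distinct cases $u, v, u', v' \in W_j$ all collapse to one under the symmetries just described, so that it genuinely suffices to treat $u \in W_j$.
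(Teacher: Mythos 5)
Your proof is correct and takes essentially the same approach as the paper: the paper simply enumerates the four cases ($u$, $v$, $u'$, or $v'$ in $W_j$) and notes that in each case the holder of $R_j$ can compute the key of the disjoint edge, whereas you collapse the cases by symmetry first. The one minor quibble is the parenthetical appeal to Lemma~\ref{L0} (which concerns a user who \emph{stores} a key rather than one who can \emph{compute} it); your main argument, which invokes the security definition directly, is the right way to conclude.
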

\begin{proof}
If $u \in W_j$, then $u$ can compute $L_{u',v'}$.
If $v \in W_j$, then $v$ can compute $L_{u',v'}$.
If $u' \in W_j$, then $u'$ can compute $L_{u,v}$.
If $v' \in W_j$, then $v'$ can compute $L_{u,v}$.
\end{proof}

\begin{lemma}
\label{L3}
Suppose $(f,\W)$ is a secure $G$-IOS and $u,v \in W_j$. Then 
the key graph $E_j$ consists of a single edge $\{\{u,v\}\}$. 
\end{lemma}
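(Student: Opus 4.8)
The plan is to exploit the fact that \emph{both} $u$ and $v$ hold the secret value $R_j$, and to push the argument from the proof of Lemma~\ref{L-1} slightly further. (The statement is only meaningful when $u \neq v$, which I assume; otherwise $W_j$ contains a single vertex and there is nothing to prove.) The key observation I will establish first is a strengthening of Lemma~\ref{L-1}: if a user $w$ stores $R_j$, then $w$ is incident with \emph{every} edge of $E_j$. Indeed, if $\{a,b\} \in E_j$ with $w \notin \{a,b\}$, then $w \neq a,b$; since $f(\{a,b\}) = j$ and $w$ possesses $R_j$, the user $w$ can compute the key $L_{a,b}$ via~(\ref{key.eq}) from $R_j$ together with the public IDs of $a$ and $b$, contradicting the security of $(f,\W)$.

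With this in hand the rest is immediate. Applying the observation to $w = u$ shows that every edge of $E_j$ is incident with $u$, and applying it to $w = v$ shows that every edge of $E_j$ is incident with $v$. Since $u \neq v$, every edge of $E_j$ must then equal $\{u,v\}$; and because $G$ is a graph, so that $E$ (hence $E_j$) is an ordinary set of edges rather than a multiset, we get $E_j \subseteq \{\{u,v\}\}$. Finally $E_j \neq \emptyset$, because $u \in W_j \subseteq V_j$ and so the vertex $u$ is spanned by $E_j$. Hence $E_j = \{\{u,v\}\}$, as required.

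I do not anticipate a genuine obstacle; the only points needing a little care are isolating the correct strengthening of Lemma~\ref{L-1} and noting that ``every edge of $E_j$ contains both $u$ and $v$'' pins $E_j$ down to a single edge precisely because $G$ is simple. One could instead argue by cases from the lemmas already proved: an edge of $E_j$ other than $\{u,v\}$ either meets $\{u,v\}$ in exactly one vertex, where Lemma~\ref{L1} applies, or is disjoint from $\{u,v\}$, where Lemma~\ref{L2} applies (using $u \in W_j$). That route, however, first requires separately establishing that $\{u,v\} \in E_j$, so I would present the direct argument above.
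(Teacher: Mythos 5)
Your proof is correct, but it takes a more direct route than the paper. The paper proves Lemma \ref{L3} by a case analysis built on Lemmas \ref{L1} and \ref{L2}: it first shows $\{u,v\} \in E_j$ (by considering edges $\{u,u'\}$ and $\{v,v'\}$ and invoking Lemma \ref{L2} when $u' \neq v'$ and Lemma \ref{L1} when $u' = v'$), and then rules out any further edge of $E_j$ by the same two lemmas — exactly the alternative you sketch in your closing remarks. You instead bypass Lemmas \ref{L1} and \ref{L2} entirely via the single observation that anyone storing $R_j$ must be incident with every edge of $E_j$ (otherwise they compute a key for an edge they do not belong to, directly from (\ref{key.eq})); applied to both $u$ and $v$, and using that $G$ is simple and $E_j$ is nonempty, this pins $E_j$ down to $\{\{u,v\}\}$ in one step. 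Your key observation is really the common generalization of Lemmas \ref{L-1} and \ref{L5} (the paper proves \ref{L5} by precisely this argument, restricted to $|W_j|=1$), so your route is shorter and arguably cleaner; what the paper's route buys is that Lemmas \ref{L1} and \ref{L2} are stated and reused anyway, so its proof of Lemma \ref{L3} needs no fresh appeal to the security definition. Your handling of the side conditions — assuming $u \neq v$, and noting $E_j \neq \emptyset$ because $u \in W_j \subseteq V_j$ (or because $f$ is surjective) — is sound.
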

\begin{proof}
First suppose that $\{u,v\} \not\in E_j$. Suppose
$\{u,u'\} \in E_j$ and $\{v,v'\} \in E_j$. If $u' \neq v'$, then Lemma \ref{L2}
is contradicted. If $u' = v'$, then Lemma \ref{L1}
is contradicted. Therefore it follows that  $\{u,v\} \in E_j$.
Suppose that there is some other edge $\{u',v'\} \in E_j$. If
$\{u,v\} \cap \{u',v'\} = \emptyset$, then Lemma \ref{L2} is contradicted.
Otherwise, suppose without loss of generality that $u = u'$. Then Lemma \ref{L1}
is contradicted.
\end{proof}

\begin{lemma}
\label{L4}
Suppose $(f,\W)$ is a secure $G$-IOS. Then, for every $j$, $|W_j| = 0, 1$ or $2$.
\end{lemma}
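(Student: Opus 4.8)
The plan is to derive this immediately from Lemma \ref{L3} by a counting argument. Suppose, for the sake of contradiction, that $|W_j| \geq 3$ for some $j$. Since $W_j$ then contains at least two distinct vertices, I would pick any two distinct vertices $u, v \in W_j$ and apply Lemma \ref{L3}: because $(f,\W)$ is a secure $G$-IOS and $u, v \in W_j$, the key graph $E_j$ consists of the single edge $\{\{u,v\}\}$.

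From this, the set $V_j$ of vertices spanned by $E_j$ is exactly $\{u,v\}$, so $|V_j| = 2$. By definition $W_j \subseteq V_j$ (this is part of the definition of $W_j$, and is also consistent with Lemma \ref{L-1}), so $|W_j| \leq |V_j| = 2$, contradicting our assumption that $|W_j| \geq 3$. Hence $|W_j| \in \{0,1,2\}$ for every $j$.

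There is no real obstacle here: the lemma is essentially a restatement of the consequence of Lemma \ref{L3} in terms of the cardinality of $W_j$. The only point requiring a moment's care is the observation that $W_j$ is a subset of $V_j$, so that the bound $|V_j| = 2$ transfers to a bound on $|W_j|$; this is immediate from how $W_j$ was defined.
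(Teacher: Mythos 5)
Your proof is correct and takes essentially the same route as the paper: both derive the lemma immediately from Lemma \ref{L3}. The only cosmetic difference is that the paper applies Lemma \ref{L3} to two different pairs from a putative three-element $W_j$ and gets contradictory descriptions of $E_j$, whereas you apply it once and use the (valid) containment $W_j \subseteq V_j$ to bound $|W_j| \leq |V_j| = 2$.
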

\begin{proof}
Suppose $|W_j| \geq 3$ and suppose  $W_j$ contains three distinct
elements $u,v,w$. 
We now apply Lemma \ref{L3}: since $u,v \in W_j$, we have that 
$E_j = \{\{u,v\}\}$. Furthermore, since $u,w \in W_j$, we have that 
$E_j = \{\{u,w\}\}$. This is impossible because $v \neq w$.
\end{proof}

\begin{lemma}
\label{L5}
Suppose $(f,\W)$ is a secure $G$-IOS. 
If $|W_j| = 1$, say $W_j = \{v\}$, then the $j$th key graph  is a star with centre $v$.
\end{lemma}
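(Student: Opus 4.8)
The plan is to argue by contradiction, using only Lemmas \ref{L1} and \ref{L2}. Since $W_j \subseteq V_j$ by the definition of $W_j$, we have $v \in V_j$, so there is at least one edge $\{v,c\} \in E_j$ incident with $v$. A graph all of whose edges are incident with a single vertex $v$ is precisely a star with centre $v$ (and in the degenerate case $|E_j|=1$ the single edge is $\{v,c\}$, with $v$ admissible as centre by Definition \ref{star.def}). So it suffices to show that every edge of $E_j$ is incident with $v$; suppose not, and let $\{a,b\} \in E_j$ with $v \notin \{a,b\}$.

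Next I would split into two cases according to how $\{a,b\}$ meets the edge $\{v,c\}$. If $\{v,c\} \cap \{a,b\} = \emptyset$, then since $v \in W_j$ we have $\{v,c,a,b\} \cap W_j \neq \emptyset$, so Lemma \ref{L2}, applied to the disjoint edges $\{v,c\}$ and $\{a,b\}$ of $E_j$, shows that $(f,\W)$ is not secure. Otherwise $\{v,c\}$ and $\{a,b\}$ share a vertex; since $v \notin \{a,b\}$, that shared vertex must be $c$, so $c \in \{a,b\}$, say $c = a$. Then $\{v,a\}, \{a,b\} \in E_j$ with $v \neq b$, and since $v \in W_j$, Lemma \ref{L1} shows that $(f,\W)$ is not secure. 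Either way we contradict the hypothesis that $(f,\W)$ is secure, so no such edge $\{a,b\}$ exists, and the $j$th key graph is a star with centre $v$.

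I do not expect a genuine obstacle here: the statement falls out immediately once one observes that an edge of $E_j$ avoiding $v$ must either be disjoint from the fixed incident edge $\{v,c\}$ (handled by Lemma \ref{L2}) or share its other endpoint $c$ (handled by Lemma \ref{L1}). The only point needing a word of care is the degenerate single-edge key graph, which is covered by the convention in Definition \ref{star.def} that either endpoint of $K_{1,1}$ may be named the centre.
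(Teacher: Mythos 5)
Your proof is correct, but it takes a more roundabout route than the paper's. The paper's proof of Lemma \ref{L5} is a one-line direct appeal to the security definition: if $\{u,u'\} \in E_j$ with $v \neq u,u'$, then $v$, who stores $R_j$, can simply compute $L_{u,u'}$, so the scheme is insecure --- no case analysis and no invocation of earlier lemmas is needed. You instead fix an edge $\{v,c\} \in E_j$ incident with $v$ (correctly justified via $W_j \subseteq V_j$) and route the offending edge $\{a,b\}$ through Lemma \ref{L2} (disjoint case) or Lemma \ref{L1} (shared endpoint $c$); every step checks out, including the verification that $v \neq b$ so Lemma \ref{L1} applies. What your approach buys is economy of new argument (everything is reduced to previously established lemmas) and an explicit treatment of the degenerate $K_{1,1}$ case via the centre convention of Definition \ref{star.def}; what it costs is that the case split is doing work the direct observation makes unnecessary, since Lemmas \ref{L1} and \ref{L2} are themselves proved by exactly the same ``a holder of $R_j$ can compute any key in $E_j$'' observation that settles this lemma in one sentence.
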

\begin{proof}
Suppose there is an edge $\{u,u'\} \in E_j$ where $v \neq u,u'$. Then $v$ can compute $L_{u,u'}$.
\end{proof}

The above conditions are necessary for a $G$-IOS, say $(f,\W)$, to be secure; we now show that they 
are also sufficient to provide security.

\begin{theorem}
\label{T1}
Suppose $(f,\W)$ is a  $G$-IOS. Then $(f,\W)$ is secure if and only if the following conditions hold:
\begin{description}
\item[(1)] $|W_j| \leq 2$ for all $j$; 
\item[(2)] for any $j$ with $|W_j| = 1$, say $W_j = \{v\}$, the $j$th key graph   is a star with centre $v$; and 
\item[(3)] for any $j$ with $|W_j| = 2$, say $W_j = \{u,v\}$,  the $j$th key graph  consists of the single edge
$\{\{u,v\}\}$.
\end{description}
\end{theorem}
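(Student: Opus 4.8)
The plan is to treat the two directions separately; the forward (necessity) direction is pure bookkeeping using the lemmas already proved, while the reverse (sufficiency) direction needs a short case analysis.

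For necessity, assume $(f,\W)$ is secure. Condition~(1) is precisely Lemma~\ref{L4}. Condition~(2) is Lemma~\ref{L5}: if $|W_j|=1$, say $W_j=\{v\}$, then the $j$th key graph is a star with centre $v$. Condition~(3) is Lemma~\ref{L3}: if $|W_j|=2$, say $W_j=\{u,v\}$, then $E_j$ consists of the single edge $\{\{u,v\}\}$. So no new argument is required for this direction.

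For sufficiency, I would first pin down exactly how security could fail. By the definition of a secure $G$-IOS and the key-derivation equation~(\ref{key.eq}), a user $w$ with $w\neq u,v$ can compute $L_{u,v}$ (where $\{u,v\}\in E_j$ and $j=f(\{u,v\})$) only if $w$ knows $R_j$ or knows $L_{u,v}$. Because $h$ is modelled as a random oracle and the $R_i$ are chosen uniformly at random, knowledge of any other secret $R_{j'}$ with $j'\neq j$, or of any other key $L_{u',v'}$ with $\{u',v'\}\neq\{u,v\}$, is independent of $L_{u,v}$ and hence of no use; so ``$w$ knows $R_j$'' reduces to $w$ stores $R_j$, i.e.\ $w\in W_j$, and ``$w$ knows $L_{u,v}$'' reduces to $w$ stores $L_{u,v}$. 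From the description of the stored keying material, the only vertices that store $L_{u,v}$ are $u$ and $v$. Hence $(f,\W)$ fails to be secure if and only if there exist an index $j$, an edge $\{u,v\}\in E_j$, and a vertex $w\in W_j$ with $w\notin\{u,v\}$.

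It then remains to verify that conditions~(1)--(3) preclude this, by a case analysis on $|W_j|\in\{0,1,2\}$ (permissible by condition~(1)). If $|W_j|=0$ there is no such $w$. If $|W_j|=1$, condition~(2) makes the $j$th key graph a star centred at the unique vertex $v\in W_j$, so $v$ is incident with every edge of $E_j$; thus for every $\{u,u'\}\in E_j$ we have $v\in\{u,u'\}$. If $|W_j|=2$, condition~(3) forces $E_j$ to be the single edge joining the two vertices of $W_j$, so again each vertex of $W_j$ is an endpoint of every edge of $E_j$. In all cases $W_j\setminus\{u,v\}=\emptyset$ for every $\{u,v\}\in E_j$, so by the characterization above $(f,\W)$ is secure. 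The only genuinely delicate point is the reduction of ``$w$ can compute $L_{u,v}$'' to ``$w$ stores $R_j$ or $w$ stores $L_{u,v}$'', which is exactly where the random-oracle modelling of $h$ is needed; everything else is elementary graph bookkeeping.
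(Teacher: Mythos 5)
Your proof is correct and follows essentially the same route as the paper: necessity is read off from Lemmas~\ref{L3}--\ref{L5}, and sufficiency reduces ``$w$ can compute $L_{u,v}$'' to ``$w$ stores $R_j$ or $L_{u,v}$'' (the random-oracle step the paper handles via Lemmas~\ref{L0} and~\ref{L-1}) followed by the same case analysis on $|W_j|\in\{0,1,2\}$. The only difference is presentational: you phrase sufficiency as a direct verification of an explicit insecurity criterion, whereas the paper argues by contradiction.
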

\begin{proof}
These three conditions are shown to be necessary for $(f,\W)$ to be 
secure in Lemmas \ref{L3}--\ref{L5}. We now show they
are sufficient for the scheme to be secure.
Suppose the three conditions hold but $(f,\W)$ is not secure. We will obtain a contradiction.

If $(f,\W)$ is not secure, then some user $w \neq u,v$ can compute a key
$L_{u,v}$ where $\{u,v\} \in E$. Let $f(\{u,v\}) = j$. 
Since $w$ does not store $L_{u,v}$ (Lemma \ref{L0}), it follows 
from Lemma \ref{L-1} that $w \in W_j$ and hence $|W_j| \geq 1$.
Applying Lemma \ref{L4}, we have that $|W_j| = 1$ or $2$.

Suppose $|W_j| = 2$. Clearly $u,v,w \in V_j$ and hence $|V_j| \geq 3$.
This contradicts Lemma \ref{L3} which says that $E_j$ consists of a single edge.
 
Suppose $|W_j| = 1$ (so $W_j = \{w\}$). 
From Lemma \ref{L5}, it follows that $E_j$ is a star with centre $w$. But then the existence of
the edge $\{u,v\} \in E_j$ yields a contradiction.
\end{proof}

It may be helpful to summarize the  cases enumerated in Theorem \ref{T1} in more descriptive language.
We will say that the $j$th key graph  is of {\em type $i$} if $|W_j| = i$. Then Theorem \ref{T1} can be
restated by saying that every key graph is of type $0$, $1$ or $2$. Furthermore, the structure of 
a key graph of a specified type is as follows:
\begin{center}
\begin{tabular}{lp{5.5in}}
{\bf type 0} & In a key graph $(V_j,E_j)$ of type $0$, no vertex in $V_j$ stores the value $R_j$.
Hence, for every vertex $u \in V_j$ and every edge $\{u,v\} \in E_j$ that is incident with $u$,
the vertex $u$ stores the key $L_{u,v}$. There is no restriction on the number of edges in $E_j$ or the
structure of $E_j$.\vspace{.05in}\\
{\bf type 1} & A key graph $(V_j,E_j)$ of type $1$ is a star whose centre (say $u$) stores the value $R_j$. Any leaf $v \in V_j$ stores the key $L_{u,v}$.\vspace{.05in}\\
{\bf type 2} & A key graph $(V_j,E_j)$ of type $2$ consists of a single edge $\{u,v\}$ where $u$ and $u$ both 
store the value $R_j$.
\end{tabular}
\end{center}

\begin{remark}
The IOS considered in Example \ref{exam1} consists of three key graphs of type 1.
\end{remark}

\subsection{Edge-decompositions into stars}
\label{star.sec}

We say that a (secure) $G$-IOS $(f,\W)$ is a {\it star-IOS} if every key graph is of type $1$.
That is, the KPS is based on an edge-decomposition of $G$ into stars. This was the
model introduced by Lee and Stinson \cite{LS05}. In this section, we show that the 
any secure $G$-IOS can be transformed into a (secure) $G$-star-IOS in which the storage of
each vertex is the same in the two schemes.

Basically, we need to describe how to change type $0$ and $2$ key graphs into type
$1$ key graphs. Let's begin by considering  a type $2$ key graph, say
$(V_j,E_j)$, which consists of single edge $\{u,v\}$. Both vertices $u$ and $v$ store
$R_j$. If we stipulate that one of vertices $u$ or $v$ stores $R_j$ and the other one
stores $L_{u,v}$, then $E_j$ has been transformed to a type $1$ key graph with the
same storage requirements.

Now we suppose we have a type $0$ key graph, $(V_j,E_j)$. No vertex in $V_j$ stores the
value $R_j$. In this case, we can split this key graph into $|E_j|$ key graphs of type
$1$, each of which is isomorphic to $K_{1,1}$. Every edge is now assigned a {\it different}
random value by the function $f$. Furthermore, for every edge $e = \{u,v\} \in E_j$, 
one endpoint stores the (new)
random value $f(e)$ and the other endpoint stores the (new) key $L_{u,v}$.

Summarizing the above discussion, we have the following theorem. 

\begin{theorem}
\label{star.thm}
If $(f,\W)$ is a secure $G$-IOS, then there exists a secure $G$-star-KPS in which the storage of
every vertex is the same in both schemes.
\end{theorem}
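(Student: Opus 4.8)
The plan is to turn the informal discussion preceding the theorem into a clean argument by explicitly describing the transformation and verifying the two things that matter: (i) the resulting scheme is again a secure $G$-IOS, in fact a star-IOS, and (ii) no vertex's storage changes. First I would set up notation: given the secure $G$-IOS $(f,\W)$, Theorem \ref{T1} tells us every key graph is of type $0$, $1$, or $2$, so I can treat the key graphs one at a time and describe a local modification for each type, leaving type $1$ key graphs untouched. For a type $2$ key graph $(V_j,E_j) = (\{u,v\},\{\{u,v\}\})$ with $W_j = \{u,v\}$, I replace $W_j$ by $\{u\}$ (say), so that $u$ continues to store $R_j$ while $v$ now stores $L_{u,v}$ instead of $R_j$; the key graph is unchanged as a graph and is now a type $1$ star $K_{1,1}$ with centre $u$. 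For a type $0$ key graph $(V_j,E_j)$ with $W_j = \emptyset$, I remove the index $j$ from the range of $f$ and introduce a fresh index for every edge $e \in E_j$: enumerate $E_j = \{e_1,\dots,e_r\}$, assign $e_i$ a brand-new value $R'_i$ (equivalently, a new index not used anywhere else), put exactly one endpoint of $e_i$ into the corresponding $W$-set, and let the other endpoint continue to hold the key for $e_i$. After doing this for all $j$, I rename the indices $1,\dots,t'$ so that $f'$ is surjective onto $\{1,\dots,t'\}$, and I collect the modified $W$-sets into $\W'$.

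Next I would verify the storage claim, which is the easy bookkeeping step. Recall from the summary after the definition of $G$-IOS that each vertex $u$ stores, for each incident edge $\{u,v\} \in E_j$, either $R_j$ (if $u \in W_j$) or $L_{u,v}$ (otherwise), and by the assumption in the excerpt it never stores both. In the type $2$ transformation, $u$ stores $R_j$ before and after, and $v$ stores one token for this edge before ($R_j$) and one after ($L_{u,v}$); so each vertex's storage count for this edge is preserved. In the type $0$ transformation, before the change each endpoint of each $e \in E_j$ stored $L_e$ (one token per incident edge in $E_j$); after the change, each endpoint of $e_i$ stores exactly one token for $e_i$ — either the fresh $R'_i$ or $L_{e_i}$ — so again the count per incident edge is preserved, hence the total storage of every vertex is unchanged. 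Here I should be slightly careful that the new random values are genuinely new and do not coincide with any existing index, which is guaranteed by construction.

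Then I would verify security by checking the three conditions of Theorem \ref{T1} for $(f',\W')$. Every key graph of $(f',\W')$ is either an untouched type $1$ key graph of $(f,\W)$, or a $K_{1,1}$ arising from a type $2$ or a split type $0$ key graph; in every case $|W'_j| = 1$ and the key graph is a star whose centre is the unique element of $W'_j$. Thus conditions (1), (2) hold, and condition (3) holds vacuously since no key graph has $|W'_j| = 2$. By Theorem \ref{T1}, $(f',\W')$ is secure, and by construction it is a star-IOS. I do not anticipate a genuine obstacle here; the only mild subtlety — and the step I would flag as needing the most care — is making the type $0$ splitting formally precise, namely re-indexing $f$ so that it remains a well-defined surjective function $E(G) \to \{1,\dots,t'\}$ with fresh secret values $R_i$ for the new indices, and confirming that the per-edge storage accounting genuinely gives equality of total storage at every vertex rather than just an inequality. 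Everything else is a direct application of Theorem \ref{T1} together with the storage summary already established in Section \ref{model.sec}.
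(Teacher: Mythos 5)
Your proposal is correct and follows essentially the same route as the paper: the paper's own argument is exactly the informal discussion you formalize, converting type~$2$ key graphs by demoting one of the two storers to holding $L_{u,v}$, and splitting type~$0$ key graphs into $K_{1,1}$ stars with fresh secret values, then noting the per-vertex storage is preserved. Your additional care in re-indexing $f$ and checking the conditions of Theorem~\ref{T1} explicitly is a sound elaboration of the same idea, not a different approach.
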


\section{Optimal storage of IOS}
\label{storage.sec}

In this section, we focus on secure $G$-IOS that have minimum possible storage requirements.
In view of Theorem \ref{star.thm}, we can restrict our attention to $G$-star-IOS.
We are interested in the total storage required in such a scheme, as well as
the (maximum) storage required by an individual user.

\subsection{Total storage}
\label{totalstorage.sec}

Let  $G = (V,E)$ be a graph with $|V| = n$ and $|E| = \epsilon$.
Suppose that $(f,\W)$ is a secure $G$-star-IOS. Then the
 {\it storage requirement} of a node $u$, denoted $s(u)$, is defined to be
 the number of stars (i.e., key graphs)
that contain $u$. The {\it total storage requirement} of the IOS, denoted by $S(f,\W)$, is defined
as \[  S(f,\W) = \sum_{u \in V} s(u).\]
Let $c(u)$ denote the number of stars for which $u$ is the centre and
let $\ell(u)$ denote the number of stars for which $u$ is a leaf.
Then 
\begin{equation}
\label{s.eq}
s(u) = c(u) + \ell(u)
\end{equation} and hence
\[ S(f,\W) = \sum_{u \in V} (c(u) + \ell(u)).\]
If we define \[  C = \sum_{u \in V} c(u),\]
then we have that
\begin{equation}
\label{S.eq}
 S(f,\W) = C + \sum_{u \in V}\ell(u).
\end{equation}

\begin{example}
\label{exam2}
In Example \ref{exam1}, we have the following storage requirements:
\[
\begin{array}{c|c|c|c}
i & c(i) & \ell(i) & s(i) \\ \hline
1 & 1 & 1 & 2 \\
2 & 0 & 3 & 3 \\
3 & 1 & 1 & 2 \\
4 & 0 & 3 & 3 \\
5 & 1 & 0 & 1 
\end{array}
\]
The total storage of the scheme is $11$.
$\blacksquare$
\end{example}


We now consider each star in the edge-decomposition to be directed as defined in Definition \ref{star.def}, i.e., each edge is directed from the leaf to the centre of the star containing it.

\begin{lemma}
\label{S.lem}
$\sum_{u \in V} \ell(u) = \epsilon$.
\end{lemma}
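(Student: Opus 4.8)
The plan is to count, in two different ways, the directed edges that arise from orienting every edge of $G$ from the leaf to the centre of the star containing it. Since the star-IOS is obtained from an edge-decomposition of $G$, every edge of $G$ lies in exactly one star, so after orienting, we obtain a directed multigraph on vertex set $V$ with exactly $\epsilon$ directed edges in total (one per edge of $G$). This gives the right-hand side of the identity.

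Next I would compute the same total by summing the indegrees over all vertices. A directed edge points into $u$ precisely when $u$ is the centre of the star containing that edge and the edge joins $u$ to one of its leaves; equivalently, each star with centre $u$ contributes exactly $\mathsf{deg}$-of-$u$-within-that-star incoming edges to $u$. But I claim the relevant count is cleaner if phrased per star and per leaf: each star with centre $w$ contributes one incoming directed edge to $w$ for each of its leaves. So $\sum_{u} (\text{indegree of } u)$ counts, over all stars and all leaf-endpoints, and this equals $\epsilon$ as already noted. Rather than indegree, the statement is phrased in terms of $\ell(u)$, the number of stars in which $u$ is a \emph{leaf}: each edge of $G$, when oriented, has a unique leaf-endpoint, namely the leaf of the star containing that edge. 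Summing over all edges and grouping by which vertex is the leaf, $\sum_{u \in V} \ell(u)$ counts exactly the number of (edge, leaf-endpoint) incidences, which is $\epsilon$ since each edge has exactly one such incidence.

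The argument is essentially a double-counting / handshake-style identity, so there is no real obstacle; the only point requiring a word of care is that in a star $K_{1,1}$ we are free to designate either endpoint as the centre, but once the star-IOS is fixed this choice is made, so each edge still has a well-defined unique leaf. I would write the proof as: orient each edge from leaf to centre; every edge of $G$ contributes exactly one directed edge, hence a total of $\epsilon$ directed edges; on the other hand, $\ell(u)$ equals the number of directed edges whose tail is $u$ (the outdegree of $u$ in this orientation, since leaves have their edge directed outward toward the centre), and summing outdegrees over all vertices counts every directed edge exactly once, giving $\sum_{u \in V} \ell(u) = \epsilon$.
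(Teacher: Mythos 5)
Your proof is correct and follows essentially the same route as the paper: identify $\ell(u)$ with the outdegree of $u$ in the leaf-to-centre orientation and observe that the sum of outdegrees counts each of the $\epsilon$ edges exactly once. The extra care you take with the $K_{1,1}$ case and with the fact that each edge lies in a unique star is sound but not needed beyond what the paper's two-line argument already assumes.
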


\begin{proof}
It is obvious that $\ell(u) = d^{+}(u)$ (i.e., the outdegree of $u$).
Clearly $\sum_{u \in V} d^{+}(u) = \epsilon$, and the result follows.
\end{proof}

It follows from (\ref{S.lem}) and Lemma \ref{S.eq} that 
\begin{equation}
\label{SC.eq}
S(f,\W) = \epsilon + C.
\end{equation}
Further, it is easy to see that
\begin{equation}
\label{c.eq}
\begin{array}{ll}
c(u)= 0 & \mbox{if $d^{-}(u) = 0$} \\
c(u) \geq 1 & \mbox{if $d^{-}(u) > 0$}.
\end{array}
\end{equation}

We are interested in minimizing the total storage, we will denote by
$S^*(G)$ the minimum value of $S(f,\W)$ over all secure $G$-star-IOS.
In order to compute $S^*(G)$, we need to minimize the value of $C$ in (\ref{SC.eq}).
It is convenient to let $C_{\mathrm{min}}(G)$ denote the minimum possible value of $C$ over
all edge-decompositions of $G$ into stars.

\begin{lemma}
Let $\alpha = \alpha(G)$ denote the size of a maximum independent set of vertices in $G$. 
Then $C_{\mathrm{min}}(G) = n - \alpha$. Furthermore, $C = C_{\mathrm{min}}(G)$ only if 
$c(u) \in \{0,1\}$ for all vertices $u$.
\end{lemma}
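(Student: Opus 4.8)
The plan is to establish the equality $C_{\mathrm{min}}(G) = n - \alpha$ by proving the two matching inequalities separately, and then to read off the ``furthermore'' clause from the equality case of the lower bound. For the lower bound, I will fix an arbitrary edge-decomposition of $G$ into stars and consider the set $I = \{u \in V : c(u) = 0\}$ of vertices that are the centre of no star. The key observation is that $I$ is independent: if $\{u,v\} \in E$ with $u, v \in I$, the unique star containing the edge $\{u,v\}$ has its centre incident with $\{u,v\}$ (true for $K_{1,m}$ with $m \geq 2$ by Definition~\ref{star.def}, and true by our choice of centre when $m=1$), so the centre is $u$ or $v$, contradicting $u,v \in I$. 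Hence $|I| \leq \alpha$, and since each of the remaining $n - |I|$ vertices has $c(u) \geq 1$, we get $C = \sum_{u \notin I} c(u) \geq n - |I| \geq n - \alpha$; as this holds for every decomposition, $C_{\mathrm{min}}(G) \geq n - \alpha$.

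For the reverse inequality I will exhibit a decomposition with $C = n - \alpha$. Take a maximum independent set $V_0$ (so $|V_0| = \alpha$) and set $H = V \setminus V_0$. Because $V_0$ is maximum it is maximal, so every $h \in H$ has a neighbour in $V_0$; pick one, call it $v_h$, and earmark the edge $\{h, v_h\}$ for $h$. Distinct vertices $h$ earmark distinct edges, since $\{h,v_h\}$ contains exactly one vertex of $H$, namely $h$. Then assign each remaining edge of $G$ to an endpoint lying in $H$, which is possible since $V_0$ being independent means no edge lies wholly inside $V_0$. For each $h$, the set of edges assigned to $h$ is a nonempty family of edges all incident with $h$, i.e.\ a star centred at $h$; these stars partition $E$, their centre set is exactly $H$, and therefore $c(h) = 1$ for $h \in H$ and $c(v) = 0$ for $v \in V_0$, giving $C = |H| = n - \alpha$. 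Combined with the lower bound, $C_{\mathrm{min}}(G) = n - \alpha$.

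For the ``furthermore'' clause, suppose a decomposition attains $C = C_{\mathrm{min}}(G) = n - \alpha$ yet $c(u_0) \geq 2$ for some $u_0$. With $I$ as above (so $|I| \leq \alpha$ still), splitting off $u_0$ gives $C \geq c(u_0) + \sum_{u \notin I,\, u \neq u_0} c(u) \geq 2 + (n - |I| - 1) = n - |I| + 1 \geq n - \alpha + 1$, a contradiction; hence $c(u) \in \{0,1\}$ for all $u$. I expect the only genuinely delicate step to be the construction: one must notice that maximality of $V_0$ is exactly what guarantees every vertex of $H$ owns a ``private'' edge into $V_0$ and can thus be realized as a star centre, which is what prevents some vertex of $H$ from being forced to have $c(u) = 0$ and the count from exceeding $n - \alpha$.
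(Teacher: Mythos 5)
Your proof is correct and follows essentially the same route as the paper: the lower bound comes from observing that the vertices with $c(u)=0$ form an independent set, the upper bound from a decomposition built around a maximum independent set $V_0$ (the paper phrases this as orienting all edges away from $V_0$ and the rest arbitrarily), and the equality clause from the same counting. Your extra ``earmarking'' step, which guarantees every vertex of $V\setminus V_0$ is the centre of exactly one star, is harmless but unnecessary, since the upper bound only requires $C\leq n-\alpha$.
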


\begin{proof}
Let $V_0$ be a set of $\alpha$ independent vertices in $V$. Direct all edges
that are incident with a vertex $v \in V_0$ away from $v$, and direct any 
remaining edges arbitrarily. This shows that $C_{\mathrm{min}} \leq n - \alpha$.

Conversely, observe that the set of vertices for which $c(u) = 0$ form an independent set in $G$.
This yields the bound $C_{\mathrm{min}} \geq n - \alpha$. Further, in order for this bound to be met
with equality, $c(u) \leq 1$ for all vertices $u$.
\end{proof}

\begin{theorem}
\label{S*bound}
For any graph $G$, $S^*(G) =  n + \epsilon - \alpha$.
\end{theorem}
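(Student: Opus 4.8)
The plan is to combine the two results already established: equation~(\ref{SC.eq}), which states $S(f,\W) = \epsilon + C$ for any secure $G$-star-IOS, and the preceding lemma, which states $C_{\mathrm{min}}(G) = n - \alpha$. Since $\epsilon$ is a fixed quantity depending only on $G$, minimizing $S(f,\W)$ over all secure $G$-star-IOS is the same as minimizing $C$ over all such schemes. By Theorem~\ref{star.thm}, restricting attention to $G$-star-IOS loses no generality for the purpose of computing the optimal total storage, so $S^*(G)$ is indeed achieved by some $G$-star-IOS.

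First I would observe that every secure $G$-star-IOS corresponds to an edge-decomposition of $G$ into stars (together with a choice of centre for each star), and conversely every such edge-decomposition yields a secure $G$-star-IOS by Theorem~\ref{T1}(2). Hence the set of achievable values of $C$ over secure $G$-star-IOS is exactly the set of achievable values of $C$ over edge-decompositions of $G$ into stars, whose minimum is $C_{\mathrm{min}}(G)$ by definition. Therefore
\[
S^*(G) = \min_{(f,\W)} S(f,\W) = \epsilon + \min_{(f,\W)} C = \epsilon + C_{\mathrm{min}}(G) = \epsilon + n - \alpha,
\]
which is the claimed formula.

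There is essentially no obstacle here: the theorem is a direct corollary of the lemma and of~(\ref{SC.eq}), and the only thing worth spelling out is the correspondence between secure $G$-star-IOS and star edge-decompositions of $G$, which ensures that the minimum of $C$ is the same whether taken over schemes or over decompositions. One should also note for completeness that $\alpha \geq 1$ and that an edge-decomposition into stars always exists (e.g.\ take each edge as its own $K_{1,1}$), so $S^*(G)$ is well-defined. The rest is the arithmetic displayed above.
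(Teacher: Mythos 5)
Your proof is correct and follows exactly the route the paper intends: Theorem~\ref{S*bound} is stated there without a separate proof precisely because it is the immediate combination of equation~(\ref{SC.eq}) with the lemma giving $C_{\mathrm{min}}(G) = n - \alpha$, which is what you do. Your extra remarks on the correspondence between secure $G$-star-IOS and star edge-decompositions, and on well-definedness, are sound and only make explicit what the paper leaves implicit.
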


\begin{remark}
For the graph considered in Example \ref{exam1}, it is easy to see that $\alpha = 2$.
Theorem \ref{S*bound} then yields $S^*(G) = 11$. The scheme constructed in
Example \ref{exam1} has  total storage 11, as noted in Example \ref{exam2}. Therefore
this scheme has optimal total storage.
\end{remark}

We now consider the situation where $G$ is a complete graph $K_n$. 
The following result is an immediate corollary of Theorem  \ref{S*bound};
it improves the constructions given in \cite{LS05,CAG13} by one.

\begin{corollary}
$S^*(K_n) = \binom{n+1}{2} - 1$. Further, an edge-decomposition of $K_n$ into stars
meets this bound with equality if and only if there is a (necessarily unique)
vertex $v$ with $d^{-}(v) =  0$.
\end{corollary}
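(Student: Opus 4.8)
The plan is to derive the corollary directly from Theorem~\ref{S*bound} by specializing to $G = K_n$ and then analyzing the equality condition. First I would compute the three relevant parameters for the complete graph: we have $\epsilon = \binom{n}{2}$, and since any two vertices of $K_n$ are adjacent, no independent set can contain more than one vertex, so $\alpha(K_n) = 1$. Substituting into $S^*(G) = n + \epsilon - \alpha$ gives $S^*(K_n) = n + \binom{n}{2} - 1$, and it remains only to check the identity $n + \binom{n}{2} = \binom{n+1}{2}$, which follows from $\binom{n+1}{2} = \binom{n}{2} + \binom{n}{1}$ (Pascal's rule). This yields $S^*(K_n) = \binom{n+1}{2} - 1$.

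For the equality characterization, I would trace back through the chain $S(f,\W) = \epsilon + C$ (equation~(\ref{SC.eq})) and the lemma preceding Theorem~\ref{S*bound}, which states that $C_{\mathrm{min}}(G) = n - \alpha(G)$ and that $C = C_{\mathrm{min}}(G)$ forces $c(u) \in \{0,1\}$ for all $u$, with the set $\{u : c(u) = 0\}$ being an independent set of size $\alpha$. For $K_n$ this means: an edge-decomposition into stars achieves minimum total storage if and only if $C = n - 1$, if and only if there is exactly one vertex $v$ with $c(v) = 0$ and every other vertex has $c(u) = 1$. Since $c(v) = 0$ is equivalent to $d^{-}(v) = 0$ by~(\ref{c.eq}), the condition becomes: there is a unique vertex $v$ with $d^{-}(v) = 0$. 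I should also verify that such a decomposition exists and that uniqueness of $v$ is automatic — if two distinct vertices $v, v'$ both had indegree $0$, then $\{v, v'\}$ would be an independent set of size $2$ in $K_n$, which is impossible; conversely the construction in the lemma's proof (orient all edges at a chosen vertex $v$ away from it, orient the rest arbitrarily to avoid creating a second sink, e.g. acyclically) realizes a decomposition with a single indegree-$0$ vertex.

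The genuinely delicate point — more bookkeeping than obstacle — is confirming that ``$c(u) \in \{0,1\}$ for all $u$ together with $|\{u : c(u)=0\}| = 1$'' is not just necessary but also sufficient for optimality, i.e. that this directed-star structure actually comes from a legitimate secure $G$-star-IOS with the claimed total storage. This is handled by the forward direction of the lemma's proof: given the orientation, assigning each vertex $u$ the in-star centred at $u$ (when $d^-(u) > 0$) gives a star-decomposition with $C = \sum_u c(u) = n - 1$ and hence $S(f,\W) = \epsilon + (n-1) = \binom{n+1}{2} - 1$, which matches $S^*(K_n)$, so the scheme is optimal. I do not expect any step here to be hard; the corollary is essentially an arithmetic specialization plus a one-line application of the independent-set characterization, and the main care needed is in phrasing the equality condition cleanly in terms of $d^{-}(v) = 0$ and noting the uniqueness is forced.
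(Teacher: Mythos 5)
Your proposal is correct and follows exactly the route the paper intends: the paper presents this as an immediate corollary of Theorem~\ref{S*bound} (using $\alpha(K_n)=1$, $\epsilon=\binom{n}{2}$, and Pascal's rule), with the equality condition read off from the lemma computing $C_{\mathrm{min}}(G)=n-\alpha$ and the equivalence $c(v)=0 \Leftrightarrow d^-(v)=0$ from~(\ref{c.eq}). Your additional care in noting that optimality also forces $c(u)=1$ (not merely $c(u)\geq 1$) for every vertex of positive indegree, and that uniqueness of the indegree-$0$ vertex is automatic in $K_n$, is a faithful and slightly more explicit rendering of the same argument.
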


It is well-known that computing the exact value of $\alpha(G)$ is \textsf{NP}-hard.
Thus we also have the following corollary of Theorem  \ref{S*bound}.

\begin{corollary}
Given a graph $G$, computing $S^*(G)$ is \textsf{NP}-hard.
\end{corollary}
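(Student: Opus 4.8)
The plan is to give a polynomial-time Turing reduction from the problem of computing $\alpha(G)$ (the independence number, which is \textsf{NP}-hard) to the problem of computing $S^*(G)$. By Theorem \ref{S*bound}, we have the identity $S^*(G) = n + \epsilon - \alpha(G)$ for every graph $G$. Since $n = |V|$ and $\epsilon = |E|$ can both be read off from the input graph $G$ in linear time, this identity can be rearranged as $\alpha(G) = n + \epsilon - S^*(G)$.

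First I would invoke Theorem \ref{S*bound} to obtain the exact formula $S^*(G) = n + \epsilon - \alpha$. Then, given any algorithm (or oracle) that computes $S^*(G)$ on input $G$, I would describe the following procedure for computing $\alpha(G)$: on input $G = (V,E)$, compute $n \gets |V|$ and $\epsilon \gets |E|$, call the assumed subroutine to obtain the value $S^*(G)$, and output $n + \epsilon - S^*(G)$. By the formula of Theorem \ref{S*bound}, this output equals $\alpha(G)$, and the only work beyond the single subroutine call is counting vertices and edges and performing two arithmetic operations, all of which take polynomial (indeed linear) time. Hence a polynomial-time algorithm for computing $S^*(G)$ would yield a polynomial-time algorithm for computing $\alpha(G)$, contradicting the \textsf{NP}-hardness of the latter. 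The conclusion is that computing $S^*(G)$ is \textsf{NP}-hard.

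There is essentially no technical obstacle here, since all the mathematical content is already packaged in Theorem \ref{S*bound}; the only point worth care is to make sure the reduction goes in the correct direction (we are reducing the known-hard problem $\alpha$ to the target problem $S^*$, not the reverse) and to note explicitly that the formula lets us recover $\alpha$ from $S^*$ with only a trivial amount of extra computation. One could also phrase this more briefly as: the map $G \mapsto G$ together with the output transformation $x \mapsto n + \epsilon - x$ is a polynomial-time Turing (in fact many-one-style) reduction from \textsc{Independence Number} to the problem of computing $S^*$, so the latter inherits \textsf{NP}-hardness.
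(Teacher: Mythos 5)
Your proposal is correct and is exactly the argument the paper intends: the identity $S^*(G) = n + \epsilon - \alpha$ from Theorem \ref{S*bound} lets one recover $\alpha(G)$ from $S^*(G)$ with trivial extra work, so \textsf{NP}-hardness of computing $\alpha$ transfers to computing $S^*$. The paper leaves this as an immediate corollary without writing out the reduction, and your write-up just makes the same reasoning explicit (including the correct direction of the reduction).
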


\subsection{Maximum storage}
\label{maxstorage.sec}

It is also of interest to consider the \emph{maximum storage} of a secure $G$-IOS, say $(f,\W)$,  
which is defined to be $S_{\mathrm{max}}(f,\W) = \max \{ s(u) : u \in V \}$.
Define $S_{\mathrm{max}}^*(G)$ to be the minimum value of $S_{\mathrm{max}}(f,\W)$ over all $(f,\W)$ that
are secure $G$-IOS. As before, we can restrict our attention to  $(f,\W)$ that
are secure $G$-star-IOS.

We begin with a lemma that states a simplification we can make without loss of generality.

\begin{lemma}
\label{c=1.lem}
For any graph $G$, there exists a secure $G$-IOS having optimal maximum storage $S_{\mathrm{max}}^*(G)$ 
in which $c(u) \in \{0,1\}$ for all vertices $u$.
\end{lemma}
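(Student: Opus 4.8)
The plan is to start with an arbitrary secure $G$-star-IOS $(f,\W)$ achieving $S_{\mathrm{max}}(f,\W) = S_{\mathrm{max}}^*(G)$ (such a scheme exists by the discussion in Section~\ref{maxstorage.sec}, since we may restrict attention to star-IOS) and to modify it, vertex by vertex, so that every vertex is the centre of at most one star, without ever increasing any vertex's storage. Recall from~(\ref{s.eq}) that $s(u) = c(u) + \ell(u)$, so it suffices to show that whenever some vertex $u$ has $c(u) \geq 2$ we can perform a local surgery on the star-decomposition that reduces $c(u)$ by at least one, does not change $\ell(u)$, and does not increase $c(w)$ or $\ell(w)$ for any other vertex $w$; iterating this finishes the argument since $\sum_u c(u)$ strictly decreases each time and is bounded below.

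The key step is the surgery itself. Suppose $u$ is the centre of two distinct stars in the decomposition, with leaf sets $A$ and $B$ (these are disjoint sets of neighbours of $u$, and all edges from $u$ to $A \cup B$ are directed into $u$). Merge these two stars into a single star centred at $u$ with leaf set $A \cup B$: this is again a legitimate star (a copy of $K_{1,|A|+|B|}$), all its edges are still directed leaf-to-centre, and assigning it a fresh secret value keeps the scheme secure by Theorem~\ref{T1}(2). The effect on storage is exactly this: $c(u)$ drops by $1$ while $\ell(u)$ is unchanged (the edges incident to $u$ are the same, still all oriented into $u$), so $s(u)$ strictly decreases; for a leaf $w \in A \cup B$, the number of stars containing $w$ is unchanged, since $w$ was a leaf of exactly one of the two old stars and is now a leaf of the one merged star, so $c(w)$ and $\ell(w)$ and hence $s(w)$ are unaffected; and all other vertices are untouched. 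Hence $S_{\mathrm{max}}$ does not increase, and since it was already optimal it stays equal to $S_{\mathrm{max}}^*(G)$.

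Repeating this merge for every vertex with $c(u) \geq 2$ (each merge reduces $\sum_{u \in V} c(u)$ by one, so the process terminates) yields a secure $G$-star-IOS, still with maximum storage $S_{\mathrm{max}}^*(G)$, in which $c(u) \in \{0,1\}$ for every vertex $u$. The only thing to be slightly careful about is bookkeeping: one should note that merging two stars at $u$ cannot create a new star centred elsewhere or change the orientation of any edge not incident to $u$, so the invariant ``$c(w) \le$ its previous value for all $w \ne u$'' genuinely holds; this is the one point that needs to be stated cleanly rather than the calculation being hard. I do not expect any real obstacle here — the lemma is essentially a normalisation remark, and the merge operation is the natural one.
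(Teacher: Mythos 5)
Your proposal is correct and matches the paper's own argument: both take an optimal star-decomposition and merge stars sharing a common centre, observing that this decreases $s(u)$ at the centre while leaving every other vertex's storage unchanged, then iterate. The only cosmetic difference is that you merge two stars at a time (with an explicit termination count) whereas the paper merges all stars at a given centre in one step.
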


\begin{proof} 
Consider the star-decomposition associated with a secure $G$-IOS having optimal maximum storage $S_{\mathrm{max}}^*(G)$. If $c(u) > 1$ for some vertex $u$, then merge all the stars having centre $u$.
This reduces the storage $s(u)$ and leaves the storage of all other vertices unchanged. Repeat this process
until $c(u) \in \{0,1\}$ for all vertices $u$.
\end{proof}

We next give a construction that yields an upper bound on $S^*(G)$.
This is a slight modification of a construction given by Lee and Stinson \cite{LS05} that 
only applied to regular graphs of even degree.

\begin{theorem}
\label{euler.thm}
Let $G = (V,E)$ be a graph and let $d$ denote the maximum degree of any vertex in $V$.
Then 
\[S_{\mathrm{max}}^*(G)  \leq
\begin{cases}
\frac{d+3}{2} & \text{if $d$ is odd}, \\
\frac{d+2}{2} & \text{if $d$ is even}.
\end{cases}
\]
\end{theorem}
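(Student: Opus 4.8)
The plan is to construct an explicit secure $G$-star-IOS whose maximum storage meets the stated bound, using an Eulerian-circuit argument in the spirit of Lee and Stinson \cite{LS05}. First I would reduce to the case where every vertex of $G$ has the same parity of degree by adding a minimal number of auxiliary edges: if $d$ is even, add edges (possibly after introducing one extra dummy vertex) to make $G$ into a graph $G'$ in which every vertex has even degree and the maximum degree is still $d$ (or at most $d$); if $d$ is odd, first make every degree even by adding at most one edge per odd-degree vertex, which may raise the maximum degree to $d+1$. The key point is that adding edges only makes the storage problem harder, so an upper bound for $G'$ transfers to $G$, and we must track carefully how much the maximum degree grows. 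Concretely, in the odd case the relevant even bound applied with maximum degree $d+1$ gives $\frac{(d+1)+2}{2} = \frac{d+3}{2}$, matching the claimed bound, so the odd case follows from the even case by this parity fix.

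For the even case, the plan is: in each connected component of $G'$, every vertex has even degree, so by the stated fact about Eulerian circuits there is an Eulerian circuit. Traverse the circuit and orient each edge in the direction of traversal; this produces an orientation in which every vertex has indegree equal to outdegree, hence indegree $d^-(u) = \deg(u)/2 \le d/2$ for every vertex. Now build the star-decomposition from this orientation exactly as in the type-1 picture of Theorem \ref{T1}: for each vertex $u$ with $d^-(u) > 0$, form one star with centre $u$ whose leaves are the in-neighbours of $u$ (this uses Lemma \ref{c=1.lem}'s normalization that $c(u)\le 1$). Each such star is a valid type-1 key graph, the collection of stars decomposes $E(G')$, and the resulting $G'$-star-IOS is secure by Theorem \ref{T1}. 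The storage of a vertex $u$ is $s(u) = c(u) + \ell(u)$ where $\ell(u) = d^+(u) = \deg(u)/2 \le d/2$ and $c(u) \le 1$, giving $s(u) \le d/2 + 1 = \frac{d+2}{2}$. Restricting this scheme back to $G \subseteq G'$ (discarding stars or leaves corresponding to added edges only decreases storage) yields a secure $G$-star-IOS with $S_{\mathrm{max}} \le \frac{d+2}{2}$, as required.

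The main obstacle I anticipate is the bookkeeping in the parity-fixing step: one must verify that the edges added to equalize degree parity can be chosen so as not to increase the maximum degree beyond what the bound allows (beyond $d$ in the even case, beyond $d+1$ in the odd case), and one must handle connectivity (Eulerian circuits require connectedness, so components with an odd number of odd-degree vertices — impossible, since every graph has an even number — need not be worried about, but isolated issues with components all of whose vertices already have even degree, or the need for a single dummy vertex to absorb leftover odd-degree vertices, must be addressed). A clean way to sidestep most of this is: pair up the odd-degree vertices arbitrarily and add an edge between each pair (adding a parallel edge or routing through a dummy vertex if the pair is already adjacent), noting that each vertex receives at most one new edge, so its degree rises by at most one; then add edges within each component to restore connectivity of the multigraph if needed, or simply apply the Eulerian argument component-by-component since each component of the even-degree multigraph is itself Eulerian. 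I would also double-check the edge cases $d = 1$ and $d = 2$ to confirm the formula is not off by one there. The rest of the argument — the Eulerian orientation, the translation into type-1 key graphs, and the security check via Theorem \ref{T1} — is routine given the machinery already developed.
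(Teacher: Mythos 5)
Your proposal is correct and is essentially the paper's own argument: add a matching on the odd-degree vertices to obtain an even-degree (multi)graph, orient along an Eulerian circuit component by component, take the stars of edges directed into each vertex, and bound $s(u)\le 1+d^{-}(u)$. The only cosmetic difference is that you deduce the odd case by applying the even bound to $G'$ with maximum degree $d+1$, whereas the paper deletes the matching edges and notes $|d^{-}(v)-d^{+}(v)|\le 1$ directly; these are the same computation.
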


\begin{proof}
We first assume that $G$ is connected.
Let $V_0 \subseteq V$ be the vertices in $V$ that have odd degree.
Clearly $|V_0|$ is even. Let $M$ be any matching of the vertices in $V_0$;
$M$ consists of $|V_0|/2$ disjoint edges. Now consider the multigraph
$G' = (V, E' = E \cup M)$. Every vertex in $G'$ has even degree, so $G'$ has a (directed)
eulerian circuit, say $D$.

The important property is that, with respect to this orientation defined on the edges in $E'$, 
$d^+(u) = d^-(u)$ for every vertex $u$. Now remove the edges in $M$ and consider the resulting 
orientation on the edges in $E$. We have $|d^{-}(v) - d^{+}(v)| \leq 1$ if $d(v)$ is odd and
$d^{-}(v) = d^{+}(v)$ if $d(v)$ is even.
This orientation gives rise to an associated 
decomposition of $E$ into stars, where each star consists of the edges directed into a vertex. 

To complete the proof, we observe that any vertex $v$ has storage 
\[s(v) = c(v) + \ell(v) = c(v) + d^{-}(v) \leq 1 + d^{-}(v).\]
If $d$ is even, then $d^{-}(v) \leq d/2$ for all $v$, and if $d$ is odd, then
$d^{-}(v) \leq (d+1)/2$ for all $v$.
The desired result follows.

If $G$ is not connected, then apply the above-described technique to every connected
component of $G$.
\end{proof}

If we start with a regular graph degree at least two, then we can show that the result obtained above is optimal.
 
\begin{theorem}
\label{regular-1.thm}
Let $G = (V,E)$ be a graph that is regular of even degree $d>0$.
Then $S_{\mathrm{max}}^*(G) = (d+2)/2$.
\end{theorem}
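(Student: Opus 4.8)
\textbf{Proof proposal for Theorem \ref{regular-1.thm}.}

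The plan is to prove the matching lower bound $S_{\mathrm{max}}^*(G) \geq (d+2)/2$, since Theorem \ref{euler.thm} already gives $S_{\mathrm{max}}^*(G) \leq (d+2)/2$ for $d$ even. By Lemma \ref{c=1.lem} I may restrict attention to a secure $G$-star-IOS in which $c(u) \in \{0,1\}$ for every vertex, so that $s(u) = c(u) + \ell(u) = c(u) + d^-(u)$ where the orientation is the one sending each edge from leaf to centre. Assume for contradiction that $s(u) \leq d/2$ for all $u$. First I would observe that any vertex $v$ with $c(v) = 0$ (equivalently $d^-(v) = 0$) has all $d$ of its incident edges directed outward, i.e.\ $\ell(v) = d^+(v) = d$; but then $s(v) = d > d/2$ since $d > 0$, a contradiction. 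Hence $c(u) = 1$ for \emph{every} vertex, so $C = \sum_u c(u) = n$.

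Next I would count edges via the orientation. Since every vertex is the centre of exactly one star, the assumption $s(u) \leq d/2$ forces $\ell(u) = d^+(u) \leq d/2 - 1$ for every $u$. Summing the outdegrees gives $\epsilon = \sum_{u \in V} d^+(u) \leq n(d/2 - 1)$. On the other hand, $G$ is $d$-regular, so $\epsilon = nd/2$. Combining, $nd/2 \leq n(d/2 - 1) = nd/2 - n$, which forces $n \leq 0$, a contradiction. Therefore $S_{\mathrm{max}}^*(G) \geq (d+2)/2$, and combined with Theorem \ref{euler.thm} we obtain equality.

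I would present this cleanly by noting the two key facts in order: (i) $c(u) = 1$ for all $u$ (else some vertex already has storage $d > d/2$), and (ii) the outdegree sum $\sum_u d^+(u)$ equals $\epsilon = nd/2$, while the storage bound caps each $d^+(u)$ at $d/2 - 1$. Neither step is a serious obstacle; the only point requiring a little care is the reduction to the $c(u) \in \{0,1\}$ case, which is handled entirely by Lemma \ref{c=1.lem}, and the observation that for a $d$-regular graph with $d > 0$ no vertex can have $d^-(u) = 0$ without blowing the storage bound. If $G$ is disconnected I would simply apply the argument componentwise, noting that each component is itself $d$-regular with $d > 0$, so the same contradiction arises within any single component.
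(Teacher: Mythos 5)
Your proof is correct, but it follows a genuinely different route from the paper's. The paper gets the lower bound by averaging the total-storage formula of Theorem \ref{S*bound}: $S_{\mathrm{max}}^*(G) \geq \lceil S^*(G)/n \rceil = \lceil 1 + d/2 - \alpha/n \rceil$, and then uses $1 \leq \alpha < n$ to conclude $S_{\mathrm{max}}^*(G) \geq 1 + d/2$. You instead argue directly on the star decomposition: assuming $s(u) \leq d/2$ everywhere, you first force $c(u) = 1$ for every vertex (a vertex with $c(u)=0$ has all $d$ incident edges directed outward, hence storage $d > d/2$), and then play the identity $\sum_u \ell(u) = \epsilon = nd/2$ of Lemma \ref{S.lem} against the cap $\ell(u) \leq d/2 - 1$. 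Your version is self-contained and never needs Theorem \ref{S*bound} or the independence number --- the role that $\alpha \leq n-1$ plays in the paper is played in your argument by the observation that a centre-free vertex is already overloaded; what the paper's version buys is reuse, since the same averaging template is applied almost verbatim to the odd-degree case. Two small points: you write $s(u) = c(u) + d^-(u)$ but then correctly switch to $\ell(u) = d^+(u)$; under the leaf-to-centre orientation $\ell(u)$ is the outdegree (Lemma \ref{S.lem}), so use $d^+$ throughout (the sum $\sum_u \ell(u) = \epsilon$ you rely on is unaffected). And the componentwise remark at the end is unnecessary, since your counting argument is global and never uses connectivity.
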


\begin{proof}
We have that $S^*(G) = n + \epsilon - \alpha$. Here $\epsilon = nd/2$ since $G$ is $d$-regular.
Therefore, 
\[ S_{\mathrm{max}}^*(G) \geq \left\lceil \frac{n + \epsilon - \alpha}{n}  \right\rceil 
= \left\lceil 1 + \frac{d}{2} - \frac{\alpha}{n}  \right\rceil .\]
Since $1 \leq \alpha < n$, it follows that
$S_{\mathrm{max}}^*(G) \geq 1 + d/2$. We $S_{\mathrm{max}}^*(G) \leq 1 + d/2$ from Theorem \ref{euler.thm},
so the result follows.
\end{proof}

For regular graphs of odd degree, we have the following similar result.

\begin{theorem}
Let $G = (V,E)$ be a graph that is regular of odd degree $d\geq 3$.
Then $S_{\mathrm{max}}^*(G) = (d+3)/2$.
\end{theorem}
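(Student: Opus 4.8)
The plan is to mirror the structure of the proof of Theorem~\ref{regular-1.thm}, using the total-storage formula of Theorem~\ref{S*bound} to get a lower bound and Theorem~\ref{euler.thm} to get the matching upper bound. The upper bound is immediate: since $d$ is odd, Theorem~\ref{euler.thm} gives $S_{\mathrm{max}}^*(G) \leq (d+3)/2$. So the work is entirely in the lower bound $S_{\mathrm{max}}^*(G) \geq (d+3)/2$.

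For the lower bound, the naive averaging argument used in the even case is \emph{not} quite strong enough here, so this is where the main obstacle lies. Averaging gives
\[
S_{\mathrm{max}}^*(G) \geq \left\lceil \frac{n + \epsilon - \alpha}{n} \right\rceil
= \left\lceil 1 + \frac{d}{2} - \frac{\alpha}{n} \right\rceil,
\]
and since $d$ is odd, $d/2$ is a half-integer; using $\alpha < n$ this ceiling is at least $\lceil 1 + d/2 - 1 + 1/n \rceil = \lceil d/2 + 1/n\rceil = (d+1)/2$, which is one short of the target. So I would argue directly. By Lemma~\ref{c=1.lem} it suffices to consider a secure $G$-star-IOS with $c(u)\in\{0,1\}$ for all $u$; fix the associated orientation of $E$, so $\ell(u) = d^-(u)$ and $s(u) = c(u) + d^-(u)$. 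Suppose for contradiction that $s(u) \leq (d+1)/2$ for every $u$. For a vertex $u$ with $d^-(u) > 0$ we have $c(u) = 1$, hence $d^-(u) \leq (d-1)/2$; for a vertex $u$ with $d^-(u) = 0$ that bound holds trivially. So $d^-(u) \leq (d-1)/2$ for \emph{every} vertex. But then $\epsilon = \sum_u d^-(u) \leq n(d-1)/2$, contradicting $\epsilon = nd/2$. This contradiction establishes $S_{\mathrm{max}}^*(G) \geq (d+1)/2 + 1 = (d+3)/2$.

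Actually a cleaner way to present the same idea, avoiding the case split on $d^-(u)=0$: if $s(u) \leq (d+1)/2$ for all $u$, then in particular $d^-(u) = \ell(u) \leq s(u) \leq (d+1)/2$ for all $u$, so $\sum_u d^-(u) \leq n(d+1)/2$; and the vertices with $d^-(u) = 0$ form an independent set, so there are at most $\alpha$ of them, meaning at least $n - \alpha$ vertices have $d^-(u) \geq 1$ and hence $c(u) = 1$. Combining, $\sum_u s(u) = \sum_u c(u) + \sum_u d^-(u) \geq (n-\alpha) + nd/2 = S^*(G)$, with equality forced throughout; equality in $\sum_u d^-(u) = nd/2 \leq \sum_u s(u) - (n-\alpha)$ together with $s(u) \leq (d+1)/2$ forces a delicate counting identity that one checks is violated because $d$ is odd (the per-vertex slack $s(u) - d^-(u) \in \{0,1\}$ cannot simultaneously sum to $n-\alpha$ and leave every $d^-(u)$ large enough). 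I expect the first, direct contradiction argument to be the simplest to write out rigorously, so that is the route I would take; the parity of $d$ enters precisely because $(d+1)/2$ is an integer strictly below the half-integer $d/2 + 1$ that the even-degree bound would suggest, and the floor/averaging loss must be recovered by the structural observation that $d^-(u) > 0 \Rightarrow c(u) = 1$.
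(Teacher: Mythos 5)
Your overall strategy --- upper bound from Theorem \ref{euler.thm}, lower bound by showing $S_{\mathrm{max}}^*(G) = (d+1)/2$ is impossible --- matches the paper's, but your route to the impossibility is different and, as written, has a gap. The paper invokes Rosenfeld's bound $\alpha \leq n/2$ (Theorem \ref{rosenfeld.thm}) and analyzes the equality configuration of the total-storage bound, observing that the $n/2$ non-centre vertices would each have storage $d > (d+1)/2$. You instead sum the leaf-counts directly, which is cleaner and avoids Rosenfeld entirely --- but the step ``for a vertex $u$ with $d^-(u)>0$ we have $c(u)=1$'' is where the argument breaks. You have set $d^-(u)=\ell(u)$, and the implication ``$\ell(u)>0 \Rightarrow c(u)=1$'' is simply false: a vertex can be a leaf of several stars without being the centre of any (vertex $2$ in Example \ref{exam1} has $\ell=3$ and $c=0$). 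Equation (\ref{c.eq}) gives $c(u)\geq 1$ only when the number of edges directed \emph{into} $u$ (edges of stars centred at $u$) is positive, and that quantity is not the one satisfying $s(u)=c(u)+d^-(u)$; under the other reading of $d^-$ the identity $\ell(u)=d^-(u)$ fails instead. No consistent interpretation of $d^-$ validates every step simultaneously.

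The gap is local and easily repaired, and the repair is precisely the structural fact the paper's proof also relies on: if $c(u)=0$ then $u$ is a leaf of every star containing it, so $\ell(u)=\mathsf{deg}(u)=d$ and $s(u)=d$, which exceeds $(d+1)/2$ whenever $d\geq 3$. Hence under your contradiction hypothesis every vertex must satisfy $c(u)=1$, whence $\ell(u)=s(u)-1\leq (d-1)/2$ for all $u$, and then $\epsilon=\sum_u \ell(u)\leq n(d-1)/2 < nd/2=\epsilon$, the contradiction you want. With that one observation inserted, your first argument is a complete proof and is somewhat more self-contained than the paper's, since it needs neither Theorem \ref{rosenfeld.thm} nor the value of $\alpha$. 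Your second ``cleaner'' variant is left unfinished and inherits the same confusion between $\ell(u)$ and the indegree, so I would drop it in favour of the corrected first argument.
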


\begin{proof}
As in the proof of the Theorem \ref{regular-1.thm}, 
\[ S^*(G) \geq \left\lceil \frac{n + \epsilon - \alpha}{n}  \right\rceil 
= \left\lceil 1 + \frac{d}{2} - \frac{\alpha}{n}  \right\rceil .\]
In any $m$-regular graph, we have  $\alpha \leq n/2$ from Theorem \ref{rosenfeld.thm}. 
Therefore  
\[ S_{\mathrm{max}}^*(G) \geq \left\lceil 1 + \frac{d}{2} - \frac{n/2}{n}  \right\rceil \geq 
\left\lceil  1 + \frac{d-1}{2}  \right\rceil = \frac{d+1}{2}.\] 
Now, in order for this bound to be met with equality, it must happen that
$C = \alpha = n/2$ and the total storage  $S^*(G) = n(d+1)/2$. 
In this case there will be $n/2$ vertices with $s(v) = 1$ and 
$n/2$ vertices with $s(v) = d$. So  $S_{\mathrm{max}}^*(G) = d$.
However, $(d+1)/2 < d$ when $d \geq 3$, so we have a contradiction.
Therefore,   $S_{\mathrm{max}}^*(G)  \geq (d+1)/2 + 1 = (d+3)/2$. 
$S_{\mathrm{max}}^*(G)  \leq (d+3)/2$ from Theorem \ref{euler.thm},
so the result follows.
\end{proof}

\begin{remark}
A $1$-regular graph $G$ is a union of disjoint edges. If is easy to see that 
$S_{\mathrm{max}}^*(G) = 1$ for such a graph $G$.
\end{remark}

\subsection{Minimizing maximum indegree and outdegree}
\label{MMO.sec}

The problem of computing $S_{\mathrm{max}}^*(G)$ is closely related to the
{\it minimum maximum indegree} problem \cite{FG76,Ve04}, which
is defined as follows.
Given a graph $G$, the goal is to direct the edges in $G$ so as to minimize the 
maximum indegree of a vertex of the resulting directed graph. Suppose we denote this 
quantity by $\MMI(G)$. It was shown in \cite{Ve04} that $\MMI(G)$ can be computed 
in polynomial time, more specifically in time $O(\epsilon^2)$, 
where $\epsilon$ is the number of edges in
the graph. An improved algorithm  can be found in \cite{AMOZ07}. 
These algorithms also find an orientation attaining the optimal
value $\MMI(G)$.

We will use the following simple result a bit later.

\begin{lemma}
\label{MMIbound.lem}
 $\MMI(G) \geq \lceil \frac{\epsilon}{n} \rceil$.
 \end{lemma}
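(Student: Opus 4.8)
The plan is to prove the lower bound $\MMI(G) \geq \lceil \epsilon/n \rceil$ by a simple counting argument, using the fact that in any orientation of the $\epsilon$ edges of $G$, the indegrees of the $n$ vertices sum to exactly $\epsilon$. First I would fix an arbitrary orientation of the edges of $G$ and let $d^-(v)$ denote the indegree of each vertex $v$ in the resulting directed graph. Since every edge contributes exactly $1$ to the indegree of its head, we have $\sum_{v \in V} d^-(v) = \epsilon$.

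Next, I would invoke the pigeonhole principle: if every vertex had indegree at most $\lceil \epsilon/n \rceil - 1 < \epsilon/n$, then the sum $\sum_{v\in V} d^-(v)$ would be strictly less than $n \cdot (\epsilon/n) = \epsilon$, a contradiction. Hence at least one vertex has indegree at least $\lceil \epsilon/n \rceil$, so the maximum indegree of this orientation is at least $\lceil \epsilon/n \rceil$. Since the orientation was arbitrary, the same bound holds for the orientation that attains $\MMI(G)$, giving $\MMI(G) \geq \lceil \epsilon/n \rceil$.

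There is no real obstacle here; the only thing to be slightly careful about is the passage from the averaging bound $\max_v d^-(v) \geq \epsilon/n$ to the ceiling, which is justified because $\max_v d^-(v)$ is an integer and the ceiling is the least integer that is $\geq \epsilon/n$. I would present this as a one-line argument: the maximum indegree is at least the average indegree $\epsilon/n$, and being an integer it is therefore at least $\lceil \epsilon/n \rceil$.
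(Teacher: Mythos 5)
Your argument is correct and is essentially the paper's own proof: the paper simply observes that in any orientation the average indegree is $\epsilon/n$, from which the bound follows. You have just spelled out the counting and the passage to the ceiling in more detail.
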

 
 \begin{proof} Given any orientation of the edges of a graph $G$, the 
 {\it average} indegree is  $\epsilon/n$.
 \end{proof}

We can analogously define the {\it minimum maximum outdegree} problem 
and the associated quantity $\MMO(G)$ in the obvious way.
Observe that  $\MMO(G)  = \MMI(G)$, simply by reversing the directions
of all edges in an optimal solution.

Here is our result linking maximum storage of a $G$-IOS to $\MMO(G)$.

\begin{theorem}
\label{MMO.thm}
For any graph $G$, we have
$\MMO(G) \leq S_{\mathrm{max}}^*(G) \leq \MMO(G) + 1$.
\end{theorem}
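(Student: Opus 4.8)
The plan is to prove the two inequalities separately, using the fact (established in Theorem~\ref{star.thm}) that we may restrict attention to secure $G$-star-IOS, and the identity $s(u) = c(u) + \ell(u) = c(u) + d^-(u)$ from~(\ref{s.eq}) together with~(\ref{c.eq}), where the orientation is ``edges point from leaf to centre.''

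For the lower bound $\MMO(G) \le S_{\mathrm{max}}^*(G)$, I would start from an optimal secure $G$-star-IOS and look at the associated orientation of $E$. For each vertex $u$ we have $\ell(u) = d^+(u)$, so $S_{\mathrm{max}}^*(G) = \max_u s(u) \ge \max_u \ell(u) = \max_u d^+(u)$. This particular orientation witnesses that the minimum over all orientations of the maximum outdegree is at most $\max_u d^+(u) \le S_{\mathrm{max}}^*(G)$, i.e.\ $\MMO(G) \le S_{\mathrm{max}}^*(G)$. This direction is essentially immediate.

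For the upper bound $S_{\mathrm{max}}^*(G) \le \MMO(G) + 1$, I would go the other way: take an orientation of $G$ achieving maximum outdegree exactly $\MMO(G)$ (such an orientation exists and can be found, per the cited algorithms, though for this proof we only need existence). From this orientation build a $G$-star-IOS by letting each star be the set of edges directed \emph{into} a common vertex; that vertex is the centre and it stores the corresponding $R_j$, while each leaf stores its pairwise key. By Theorem~\ref{T1} (via the ``type~1'' description) this is a secure $G$-star-IOS. Now for any vertex $v$, the number of stars in which $v$ is a leaf is $\ell(v) = d^+(v) \le \MMO(G)$, and $v$ is the centre of at most one star, so $c(v) \le 1$; hence $s(v) = c(v) + \ell(v) \le \MMO(G) + 1$. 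Taking the maximum over $v$ gives $S_{\mathrm{max}}^*(G) \le \MMO(G) + 1$.

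The only subtlety — and the step I would be most careful with — is the passage from a general orientation to a legitimate star-decomposition: one must check that grouping all in-edges at each vertex into a single star really does partition $E$ (each edge lies in exactly one such star, namely the one centred at its head) and that the resulting scheme satisfies condition~(2) of Theorem~\ref{T1}. Both are straightforward, but they are where the argument actually uses the structure theorem rather than just degree counting. Everything else is bookkeeping with $s(u) = c(u) + d^-(u)$ and $c(u) \le 1$.
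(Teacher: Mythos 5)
Your proposal is correct and follows essentially the same route as the paper: both directions come from the correspondence between star-decompositions (with $c(u)\le 1$, via Lemma~\ref{c=1.lem}) and edge orientations, using $s(u)=c(u)+\ell(u)$ with $\ell(u)=d^+(u)$. In fact your write-up is more explicit than the paper's rather terse proof, which only spells out the direction starting from an optimal star-decomposition and leaves the construction of a scheme from an $\MMO$-optimal orientation implicit.
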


\begin{proof}
Suppose we have with a star-decomposition of $G$ that minimizes $S_{\mathrm{max}}^*(G)$
and suppose that all edges are directed towards the centres of the stars in the decomposition,
as usual. From (\ref{s.eq}), the storage of a vertex $u$ is $s(u) = \ell(u) + c(u)$.
 The value $\ell(u)$ is clearly equal to the outdegree of $u$.
 From Lemma \ref{c=1.lem}, we can assume $c(0) = 0$ or $1$.
 The result follows.
\end{proof}

We expect for ``most'' graphs that $S_{\mathrm{max}}^*(G) = \MMO(G) + 1$. The only way it can occur
that
$\MMO(G) = S_{\mathrm{max}}^*(G)$ is if every vertex whose outdegree is equal to $\MMO(G)$
has indegree equal to $0$. Nevertheless, this can occur:
we show that there are infinitely many graphs for which 
$\MMO(G) = S_{\mathrm{max}}^*(G)$. 

\begin{theorem}
\label{construction}
For any integer $n > 15$, there exists a graph $G$
having $n$ vertices such that $\MMO(G) = S_{\mathrm{max}}^*(G)$.
\end{theorem}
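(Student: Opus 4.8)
The plan is to construct, for each $n > 15$, an explicit graph $G$ on $n$ vertices for which every vertex attaining the maximum outdegree in an optimal orientation is forced to have indegree $0$. By Theorem \ref{MMO.thm}, $\MMO(G) \leq S_{\mathrm{max}}^*(G) \leq \MMO(G)+1$, and the remark preceding the statement records that equality $\MMO(G) = S_{\mathrm{max}}^*(G)$ holds precisely when, in some optimal orientation, all vertices of maximum outdegree have indegree zero. So the whole task reduces to: (i) pin down $\MMO(G)$ for the chosen $G$, and (ii) exhibit a star-decomposition (equivalently, an orientation together with a choice of centres giving $c(u)\le 1$ everywhere) in which $s(u) \le \MMO(G)$ for all $u$.

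First I would pick $G$ to be a disjoint union (or a join, using Definition \ref{join.def}) of a small ``core'' gadget that pushes all its edges out onto a low-degree ``sink'' structure, padded out to $n$ vertices by adding isolated vertices or a matching so that the construction works for all $n > 15$ rather than just special values. The natural candidate is something like a complete graph $K_m$ (on a constant number of vertices, which is why the bound $n>15$ appears — we need enough room, since $\binom{6}{2}=15$) whose vertices are each joined to a private set of fresh degree-one ``pendant'' vertices chosen so that orienting all of $K_m$'s edges outward, toward the pendants, is feasible: this requires each core vertex to have at least $m-1$ pendants to absorb its outgoing $K_m$-edges. Then in the orientation where every core vertex sends all its edges to its pendants, the core vertices have outdegree exactly $m-1$ and indegree $0$, while pendants have outdegree $0$. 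One checks via Lemma \ref{MMIbound.lem} (applied to $K_m$ alone, or to a carefully chosen dense subgraph) that $\MMO(G) = \MMI(G) \geq m-1$, and the orientation above shows $\MMO(G) \leq m-1$; hence $\MMO(G) = m-1$.

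Next, I would verify the upper bound $S_{\mathrm{max}}^*(G) \leq m-1$ directly by reading off a star-decomposition from that same orientation: make each pendant the centre of a trivial $K_{1,1}$ star for its core edge (so pendants have $s = 1$), and for each core vertex $u$, it is a leaf in $m-1$ of those stars and a centre of no star, so $s(u) = \ell(u) = m-1 = \MMO(G)$; remaining padding vertices trivially have small storage. Combined with the generic lower bound $S_{\mathrm{max}}^*(G) \geq \MMO(G)$ from Theorem \ref{MMO.thm}, this gives $S_{\mathrm{max}}^*(G) = \MMO(G)$, as required. The only care needed is in choosing $m$ and the number of pendants as functions of $n$ so that the vertex count is exactly $n$ for every $n > 15$; taking $m$ fixed (say $m = 6$, needing $6 \times 5 = 30$ pendant slots, or better $m$ as large as fits) and dumping any leftover vertices as isolated points or as an independent set handles all residues.

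The main obstacle I anticipate is establishing the matching \emph{lower} bound $\MMO(G) \geq m-1$ cleanly: one must argue that no orientation of $G$ can keep every outdegree below $m-1$. For a union-of-gadgets construction this localizes to the densest gadget, and Lemma \ref{MMIbound.lem} applied to that gadget ($\epsilon/n$ for $K_m$ is $(m-1)/2$, which is too weak) is not by itself enough — so I would instead use the standard fact that $\MMO$ of a graph is at least the maximum over subgraphs $H$ of $\lceil |E(H)|/|V(H)|\rceil$, or more simply observe that in $K_m$ the sum of outdegrees is $\binom{m}{2}$ over $m$ vertices, forcing some vertex to outdegree $\geq \lceil (m-1)/2 \rceil$, which is still not $m-1$. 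This tells me the gadget should not be $K_m$ but rather an asymmetric bipartite-like gadget — e.g., one hub vertex joined to $m-1$ others that are themselves sparse — where the hub is genuinely forced to absorb or emit many edges; getting the forcing argument exactly right (so that the forced maximum outdegree vertex provably has indegree $0$ in \emph{every} optimal orientation, not just one) is the delicate part, and I would spend most of the write-up there.
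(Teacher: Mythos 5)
Your overall strategy is the right one, and it is in fact the paper's: build a graph admitting an optimal orientation in which every vertex of maximum outdegree has indegree $0$, so that $s(u)=\ell(u)=\MMO(G)$ for those vertices and $s(u)=1+\ell(u)\le\MMO(G)$ for all others. But your proposal never arrives at a working graph, and the one candidate you describe in detail is internally inconsistent: if every core vertex of $K_m$ is to have indegree $0$, then no edge of $K_m$ may be directed into either of its endpoints, which is impossible. Those $\binom{m}{2}$ internal edges are unaccounted for both in your orientation and in your star-decomposition, where each core vertex is assigned storage $m-1$ coming only from its pendant edges. More importantly, as you concede in your final paragraph, the matching lower bound $\MMO(G)\ge m-1$ is not established for this gadget and cannot be: attaching pendants only dilutes the edge density, and $K_m$ by itself forces outdegree only about $(m-1)/2$. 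The write-up therefore ends with an acknowledged open gap in place of a construction, which is precisely the part that constitutes the proof.

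The paper closes this gap with $G=K_5\vee\overline{K_t}$ for $t>10$ (so $n=t+5>15$). Here $\epsilon=5t+10$ and the \emph{global} average $\epsilon/n=5-15/(t+5)$ exceeds $4$ once $t>10$, so Lemma \ref{MMIbound.lem} alone yields $\MMO(G)\ge 5$ --- no subgraph-density argument is needed, which is exactly the difficulty your pendant construction cannot avoid. The explicit orientation directs all five edges at each independent vertex into the $K_5$ and orients the $K_5$ along an eulerian circuit, so the $t$ independent vertices have outdegree $5$ and indegree $0$ (hence $s=5$), while the $K_5$ vertices have outdegree $2$ and indegree $t+2$ (hence $s=3\le 5$); combined with $S_{\mathrm{max}}^*(G)\ge\MMO(G)$ from Theorem \ref{MMO.thm} this gives equality. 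Incidentally, the theorem is less delicate than your last paragraph suggests: even $K_{1,n-1}$ satisfies $\MMO=S_{\mathrm{max}}^*=1$. But any repair of your approach must confront the issue you correctly identified --- the lower bound on $\MMO(G)$ has to come from the density of the graph you actually build, and that is where your construction, as written, fails.
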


\begin{proof}
For any integer $t > 10$, let $G = K_5 \vee \overline{K_t}$ 
(i.e., $G$ is the join of $K_5$ and a set of $t$ independent vertices; see Definition \ref{join.def}).
This graph has $n = t+5$ vertices and $\epsilon = 5t+10$ edges.  From Lemma \ref{MMIbound.lem},
since $t > 10$, 
we see that $\MMO(G) \geq 5$. We can construct an orientation of the edges of 
$G$ in which the $t$ vertices in the $\overline{K_t}$ each have outdegree 5 and 
indegree 0, and the vertices in the $K_5$ each have outdegree 2 and indegree $t+2$.
This shows that $\MMO(G) \leq 5$, whence $\MMO(G) = 5$. 
 Since the vertices of outdegree 5 all have indegree equal to 
$0$, the resulting star decomposition proves that $S_{\mathrm{max}}^*(G) = \MMO(G) = 5$.
\end{proof}

Since $\MMO(G)$ can be computed in polynomial time, Theorem \ref{MMO.thm} establishes that
we can compute an integer $T$ such that value $S_{\mathrm{max}}^*(G) = T$ or $T+1$, in 
polynomial time. In fact, as we now show, it is possible to compute the exact value of 
$S_{\mathrm{max}}^*(G)$ in polynomial time.

\begin{theorem}
\label{MMO2.thm}
Let $G = (V,E)$ be a graph and let 
\[W = \{ v \in V : \mathsf{deg}(v) >  \MMO(G) \}.\]
Let $H = (W,F)$ be the subgraph of $G$ induced by $W$. Then
$\MMO(G) = S_{\mathrm{max}}^*(G)$ if and only if $\MMO(H) < \MMO(G)$.
\end{theorem}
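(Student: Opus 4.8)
The plan is to recast ``$\MMO(G) = S_{\mathrm{max}}^*(G)$'' as the existence of an orientation of $G$ obeying per-vertex outdegree caps, and then decide that existence question with a classical feasibility criterion for orientations. Write $m = \MMO(G)$; by Theorem~\ref{MMO.thm} we already have $S_{\mathrm{max}}^*(G) \ge m$, so it suffices to determine when $S_{\mathrm{max}}^*(G) \le m$. As in the proof of Theorem~\ref{MMO.thm}, take a secure $G$-star-IOS with $c(v) \in \{0,1\}$ for all $v$ (permissible by Lemma~\ref{c=1.lem}) and orient every edge toward the centre of its star; then $\ell(v)$ is the outdegree $d^{+}(v)$ and $c(v) = 1$ precisely when $d^{-}(v) > 0$, so by~(\ref{s.eq}) we have $s(v) = d^{+}(v) + c(v)$ with that value of $c(v)$. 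The first step is the elementary observation that ``$s(v) \le m$ for all $v$'' is equivalent to ``$d^{+}(v) \le m$ for all $v$, and moreover $d^{+}(v) \le m-1$ for all $v \in W$''. Indeed, if $v \notin W$ then $\mathsf{deg}(v) \le m$, so either $d^{-}(v) = 0$ and $s(v) = \mathsf{deg}(v) \le m$, or $d^{-}(v) > 0$ and $d^{+}(v) \le m - 1$; while if $v \in W$ then $\mathsf{deg}(v) > m$ forces $d^{-}(v) > 0$ as soon as $d^{+}(v) \le m$, so the cap on such a vertex must tighten to $m-1$. It follows that $S_{\mathrm{max}}^*(G) = m$ if and only if $G$ admits an orientation with $d^{+}(v) \le b(v)$ for every $v$, where $b(v) = m-1$ for $v \in W$ and $b(v) = m$ otherwise.

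The ``only if'' direction of the theorem is then immediate: given such an orientation, restrict it to the induced subgraph $H = (W,F)$; deleting edges cannot raise outdegrees, so every $v \in W$ has outdegree at most $m-1$ in $H$, i.e.\ $\MMO(H) \le m-1 < m$. The ``if'' direction is the substantive one. Assume $\MMO(H) \le m-1$ and fix an orientation of $H$ realizing this. Here I would invoke the classical characterization of feasible outdegree-bounded orientations (essentially Hakimi's theorem, provable via Hall's theorem or a max-flow argument): a graph has an orientation with $d^{+}(v) \le b(v)$ for all $v$ if and only if $e(U) \le \sum_{v \in U} b(v)$ for every set of vertices $U$, where $e(U)$ denotes the number of edges with both endpoints in $U$.

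To verify this inequality for our choice of $b$, fix $U$ and write $U = U_W \cup U_0$ with $U_W = U \cap W$ and $U_0 = U \setminus W$, so that $\sum_{v \in U} b(v) = (m-1)|U_W| + m|U_0|$. Split $e(U)$ into the edges lying inside $U_W$, the edges joining $U_W$ to $U_0$, and the edges lying inside $U_0$. The edges inside $U_W$ are precisely the edges of $H$ inside $U_W$ (since $H$ is the subgraph of $G$ induced by $W$), and the fixed orientation of $H$ bounds their number by $(m-1)|U_W|$. For the other two groups, count edge-endpoints lying in $U_0$: the number of edges joining $U_W$ to $U_0$, plus twice the number of edges inside $U_0$, equals the sum over $v \in U_0$ of the number of edges from $v$ into $U_W \cup U_0$, which is at most $\sum_{v \in U_0} \mathsf{deg}(v) \le m|U_0|$ because $\mathsf{deg}(v) \le m$ for $v \notin W$. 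Discarding one copy of the ``inside $U_0$'' term gives (edges joining $U_W$ to $U_0$) $+$ (edges inside $U_0$) $\le m|U_0|$, and adding this to the bound on the edges inside $U_W$ yields $e(U) \le (m-1)|U_W| + m|U_0| = \sum_{v \in U} b(v)$. Hence the desired orientation exists and $S_{\mathrm{max}}^*(G) = m$, completing the proof.

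I expect the main difficulty to be twofold: pinning down the first-step equivalence exactly right (the delicate point is that the vertices forced strictly below the cap $m$ are precisely those of $W$, via the interplay of ``$d^{-}(v) > 0$'' with ``$\mathsf{deg}(v) > m$''), and presenting the orientation-feasibility criterion at the level of rigour the paper wants; if a self-contained treatment is preferred, I would insert the short max-flow (or bipartite $b$-matching / Hall) proof of that criterion. The degenerate case $\epsilon = 0$ (where $m = 0$ and $W = \emptyset$) is harmless and can be excluded at the outset, since the IOS model already presupposes at least one edge.
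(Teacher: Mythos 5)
Your proof is correct, and the ``only if'' direction coincides with the paper's (restrict an outdegree-$\le\MMO(G)$ orientation coming from an optimal star-decomposition to $H$, noting that every $w\in W$ is forced to have positive indegree and hence outdegree at most $\MMO(G)-1$). Where you genuinely diverge is the ``if'' direction. The paper gives a short explicit construction: orient $F$ so that outdegrees in $W$ are at most $\MMO(H)$, point every $W$-to-$(V\setminus W)$ edge into $W$, orient the remaining edges arbitrarily, and check $s(v)\le\MMO(G)$ vertex by vertex. You instead reformulate $S_{\mathrm{max}}^*(G)=\MMO(G)$ as the feasibility of an orientation with per-vertex outdegree caps $b(v)$ ($m-1$ on $W$, $m$ off $W$) --- a clean intermediate statement whose proof via Lemma~\ref{c=1.lem} and equation~(\ref{s.eq}) is carefully and correctly argued --- and then decide feasibility by the Hakimi/Frank--Gy\'arf\'as criterion $e(U)\le\sum_{v\in U}b(v)$, verified by a correct edge-endpoint count splitting $U$ into $U\cap W$ and $U\setminus W$. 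What your route buys is a reusable characterization and a systematic way to handle more general storage caps; what it costs is reliance on an orientation-feasibility theorem that the paper neither states nor proves (you would need to cite it or include the max-flow/Hall argument, as you acknowledge), and the loss of the immediate constructiveness that the paper's explicit orientation provides and later exploits in the remark following Figure~\ref{fig1}. Your observation about the degenerate case $\epsilon=0$ (where the stated equivalence would fail) is a fair marginal point that the paper leaves implicit.
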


\begin{proof}
Suppose $\MMO(G) = S_{\mathrm{max}}^*(G)$ and consider an orientation of
the edges in $E$ so that the maximum outdegree of any vertex is $\MMO(G)$.
Let $w \in W$; then $w$ has indegree greater than $0$. It follows that 
$s(w) = 1 + \ell(w)$, where $\ell(w)$ equals the outdegree of $w$.
Therefore \[\ell(w) < s(w) \leq S_{\mathrm{max}}^*(G) = \MMO(G).\]
Since this holds for every vertex $w \in W$, we have $\MMO(H) < \MMO(G)$.

Conversely, suppose that $\MMO(H) < \MMO(G)$. Consider an orientation of
the edges in $F$ so that the maximum outdegree of any vertex in $W$ is at most $\MMO(H)$.
Next, direct any edges having one endpoint in
$W$ towards the incident vertex in $W$. 
At this point, we have 
\[s(w) \leq 1 + \ell(w) \leq 1 + \MMO(H) \leq \MMO(G)\]
for every vertex $w \in W$. Finally, direct any edges with both
endpoints in $V \setminus W$ arbitrarily. Then, for any vertex $v \in V \setminus W$,
we have \[s(v) \leq \mathsf{deg}(v) \leq \MMO(G).\]
Thus $\MMO(G) = S_{\mathrm{max}}^*(G)$.
\end{proof}

\begin{example}
Let us return again to the graph $G$ considered in Example \ref{exam1}.
It is easy to see that $\MMO(G) = 2$. Further, $H = G$ and hence $S_{\mathrm{max}}^*(G) = 3$.
Thus the scheme constructed in Example \ref{exam1} has optimal maximum storage.
\end{example}

\begin{example}
In the construction given in the proof of Theorem \ref{construction}, we see that $H$ consists of the 
five vertices in the $K_5$ and hence $\MMO(H) = 2$.
\end{example}

As an immediate corollary of Theorem \ref{MMO2.thm}, we obtain a polynomial-time algorithm to compute $S_{\mathrm{max}}^*(G)$. This algorithm is presented in Figure \ref{fig1}.

\begin{figure}[tb]
\caption{An algorithm to compute $S_{\mathrm{max}}^*(G)$}
\label{fig1}
\begin{center}
\framebox{
\begin{minipage}{6in}
\begin{enumerate}
\item Compute $\MMO(G)$.
\item Construct the graph $H$.
\item Compute $\MMO(H)$.
\item If $\MMO(H) < \MMO(G)$ then $S_{\mathrm{max}}^*(G) = \MMO(G)$ 
else $S_{\mathrm{max}}^*(G) = \MMO(G)+1$.
\end{enumerate}
\end{minipage}
}
\end{center}
\end{figure}
 
\begin{remark}
Using ideas from the proof of Theorem \ref{MMO2.thm}, the above algorithm can be modified 
in a straightforward way to construct a scheme attaining the
value $S_{\mathrm{max}}^*(G)$. 
\end{remark}

\section{Discussion and Summary}
\label{discuss.sec}

We have studied a general type of key redistribution scheme based on hashing
secret values along with users' IDs. We gave necessary and sufficient conditions for
schemes of this type to be secure, and we studied the problem of minimizing users' storage
in these schemes.

Choi, Acharya and Gouda \cite{CAG13} also studied key predistibution in a similar setting.
They only considered the situation where the communication graph is a complete graph and they 
gave a construction that is basically equivalent to the one found in \cite{LS05}. They also considered
lower bounds on the total storage of schemes of this type. Their model is very similar to ours,
but they do not assume (as we did) that the key derivation function is a random oracle. In 
\cite[Theorem 6]{CAG13}, it is stated that the total storage required by a secure scheme is
at least $n(n-1)/2$. The proof involves analyzing the implications of an equation of the form
\begin{equation}
\label{CAG.eq}
F(\iv,\ku) = F(\iu,\kv),
\end{equation}
where $F$ is a public key derivation
function, $\iu$ and $\iv$ are public values, and $\ku$ and $\kv$ are secret values known to 
nodes $u$ and $v$ respectively.

The  equation (\ref{CAG.eq}) ensures that nodes $u$ and $v$ will
compute the same pairwise key. Moreover, it is assumed $\ku$ and $\kv$ are 
each used for the computation of at least one other key, and then a contradiction is derived.
It is observed in \cite{LS05} that, under these circumstances, it should be infeasible
to compute $\ku$ given $\iv$ and $F(\iu,\kv)$, and it should also be infeasible
to compute $\kv$ given $\iu$ and $F(\iv,\ku)$.
It is then claimed that this means that values $ku$ and $kv$ satisfying (\ref{CAG.eq}) 
cannot be computed when the scheme is set up. However, this last assertion does not seem to 
consider the possibility that the entity that sets up the scheme can compute these values, even though
the nodes $u$ and $v$ might not be able to compute $\kv$ or $\ku$, respectively.

The Blom scheme \cite{Bl83} illustrates how this can happen.
For simplicity, we consider a Blom scheme secure against individual nodes.
Such a scheme is constructed by a trusted authority (TA) first choosing a symmetric polynomial of the form
$g(x_1,x_2) = a+ b(x_1+x_2) + cx_1x_2$.  A  node $u$ is given the polynomial $f_u(z) = g(iu,z)$
and $v$ is given the polynomial $f_v(z) = g(iv, z)$.
The pairwise key for nodes $u$ and $v$ is $g(\iu,\iv) = g(\iv,\iu)$. Node $u$ computes this key as
$f_u(\iv)$ and node $v$ computes $f_v(\iu)$. Note that node $u$ cannot compute $f_v$ and node 
$v$ cannot compute $f_u$, but the TA who sets up the scheme knows the relationship between
these polynomials.

In the setting we studied, where the key derivation function is a random oracle, the equation
(\ref{CAG.eq}) will not hold, so the above discussion does not apply. We were therefore able to prove
stronger lower bounds on the total storage, as we presented in Section \ref{storage.sec}.

\end{document}